\definecolor{C1}{RGB}{52, 89, 149}
\definecolor{C2}{RGB}{251, 77, 61}
\definecolor{C3}{RGB}{3, 206, 164}
\definecolor{C4}{RGB}{202, 21, 81}
\theoremstyle{remark}
\newcommand*{\ups}{\Upsilon}
\newcommand*{\tree}{\Upsilon^{\mathrm{(tree)}}}
\newcommand*{\s}{\mathcal{H}_S}
\newcommand*{\ot}{\otimes}
\newcommand*{\nn}{\nonumber}
\newcommand*{\id}{\mathds{1}}
\newcommand*{\llangle}{\langle \! \langle}
\newcommand*{\rrangle}{\rangle \! \rangle}
\newcommand*{\mc}{\mathcal}
\newcommand*{\dg}{\dagger}
\newcommand*{\ex}{\mathrm{e}}
\DeclareMathOperator{\tr}{tr}
\newcommand{\bigcomp}{%
  \DOTSB
  \mathop{\vphantom{\sum}\mathpalette\bigcomp@\relax}%
  \slimits@
}
\newcommand{\bigcomp@}[2]{%
  \begingroup\m@th
  \sbox\z@{$#1\sum$}%
  \setlength{\unitlength}{0.9\dimexpr\ht\z@+\dp\z@}%
  \vcenter{\hbox{%
    \begin{picture}(1,1)
    \bigcomp@linethickness{#1}
    \put(0.5,0.5){\circle{1}}
    \end{picture}%
  }}%
  \endgroup
}
\newcommand{\bigcomp@linethickness}[1]{%
  \linethickness{%
      \ifx#1\displaystyle 2\fontdimen8\textfont\else
      \ifx#1\textstyle 1.65\fontdimen8\textfont\else
      \ifx#1\scriptstyle 1.65\fontdimen8\scriptfont\else
      1.65\fontdimen8\scriptscriptfont\fi\fi\fi 3
  }%
}
\begin{document}


\title{{Capturing long-range memory structures with tree-geometry process tensors}}

\author{Neil Dowling}
\email[]{neil.dowling@monash.edu}
\address{School of Physics \& Astronomy, Monash University, Clayton, VIC 3800, Australia}

\author{Kavan Modi}
\email[]{kavan.modi@monash.edu}
\address{School of Physics \& Astronomy, Monash University, Clayton, VIC 3800, Australia}
\affiliation{Quantum for New South Wales, Sydney, NSW 2000, Australia}

\author{Roberto N. Mu\~noz}
\address{School of Physics \& Astronomy, Monash University, Clayton, VIC 3800, Australia}

\author{Sukhbinder Singh}
\address{School of Physics \& Astronomy, Monash University, Clayton, VIC 3800, Australia}
\address{Multiverse Computing, Spadina Ave., Toronto, ON M5T 2C2, Canada}

\author{Gregory A. L. White}
\email{gregory.white@fu-berlin.de}
\address{School of Physics \& Astronomy, Monash University, Clayton, VIC 3800, Australia}
\affiliation{Dahlem Center for Complex Quantum Systems, Freie Universit\"at Berlin, 14195 Berlin, Germany}

\begin{abstract}
We introduce a class of quantum non-Markovian processes -- dubbed {process trees} -- that exhibit polynomially decaying temporal correlations and memory distributed across time scales. This class of processes is described by a tensor network with tree-like geometry whose component tensors are (1) {causality-preserving} maps (superprocesses) and (2) {locality-preserving} temporal change of scale transformations. We show that the long-range correlations in this class of processes tends to originate almost entirely from memory effects, and can accommodate genuinely quantum power-law correlations in time. Importantly, this class allows efficient computation of multi-time correlation functions. To showcase the potential utility of this model-agnostic class for numerical simulation of physical models, we show how it can efficiently approximate the strong memory dynamics of the paradigmatic spin-boson model, in terms of arbitrary multitime features. 
In contrast to an equivalently costly matrix product operator (MPO) representation, the ansatz produces a fiducial characterization of the relevant physics. Finally, leveraging 2D tensor network renormalization group methods, we detail an algorithm for deriving a process tree from an underlying Hamiltonian, via the Feynmann-Vernon influence functional. Our work lays the foundation for the development of more efficient numerical techniques in the field of strongly interacting open quantum systems, as well as the theoretical development of a temporal renormalization group scheme. 
\end{abstract}

\maketitle

\section{Introduction}

Correlations and complexity are intimately intertwined. It is generally the case that for simple, ordered systems, correlations are easy to model. On the other hand, although chaotic systems can become highly correlated across large space and time scales, the correlations between local degrees of freedom can often be simply described using the tools of statistical mechanics~\cite{Deutsch1991, Srednicki,kadanoff2000statistical}. Somewhere between these two extremes lies the interesting case which can be difficult to model: \textit{complex} dynamics~\cite{Crutchfield2012,gu_quantum_2012,binder_practical_2018,ghafari_dimensional_2019}. A characteristic feature of complex systems is long-range correlations -- a power law spectrum or $1/f$ noise.\textsuperscript{\footnote{Not all systems with long-rage correlations are complex, e.g. the GHZ state, or possibly unitary (Markovian) quantum processes.}} 
{This complexity property is ubiquitous across fields of science, from criticality in condensed matter physics~\cite{stanley_introduction_1987,sachdev} and random networks~\cite{doi:10.1126/science.286.5439.509,stauffer_introduction_2014} to characteristics of the human brain~\cite{Haimovici2013} and DNA sequences~\cite{li_study_1997,PENG1995180}.}

A central challenge in modern physics is isolating the key properties of correlated quantum systems; taming complex systems into their essential physics. For instance, numerical techniques based on tensor networks -- such as tree tensor networks (TTNs)~\cite{Shi2006, Tagliacozzo2009,Silvi2010,Murg2010} and the multi-scale entanglement renormalization ansatz (MERA)~\cite{vidal_class_2008,evenbly_algorithms_2009} have constituted groundbreaking progress in the development of quantum many-body physics, accurately modeling critical states, such as ground states of gapless Hamiltonians. 
As it stands, no such equivalent exists in the dynamical setting. However, quantum combs~\cite{chiribella_theoretical_2009} and process tensors~\cite{processtensor,milz2020quantum} provide a natural framework to study multi-time processes by mapping them to many-body states. But even equipped with these space-time dualities, the translation of aforementioned tensor network results to the temporal (or spatiotemporal~\cite{Dowling2022}) regime is highly non-trivial.
In open dynamical systems, temporal correlations can be mediated by a strongly-interacting -- but inaccessible -- bath. 
The resulting model is hence both mixed and subject to causal order requirements.
That is to say, correlations are carried forward in time by an external bath, not {just via} the system itself. 
As a result, there is a glaring gap in the description of open quantum systems with power-law temporal correlations, such as in spin-boson models~\cite{le2008entanglement, LeHurrBook2010}, Floquet time crystals~\cite{ivanov2020feedback}, and open solid-state systems subject to complex noise~\cite{Paladino2014}. Such cases constitute monumental simulation challenges, with (at worst) exponential growth in both the spatial and temporal degrees of freedom.

In this work, we are concerned with {addressing} this gap. Specifically, we build a class of quantum non-Markovian processes to efficiently represent processes that exhibit strong and slowly (polynomially)-decaying temporal correlations. We then apply our methods to a prototypical model, the spin-boson model, to showcase their high efficacy over usual multitime models.

A key lesson from tensor network theory is that the geometry of a tensor network limits the structure and scaling of correlations in the many-body state it represents~\cite{evenbly_tensor_2011}. For instance, a matrix product state (MPS), which has a linear geometry, generally exhibits exponentially decaying correlations~\cite{Eisert2010area,brandao2015exponential}. Meanwhile, TTNs and MERA -- both of which are hierarchical tensor networks extending to a hyperbolic geometry -- naturally accommodate polynomially decaying \emph{spatial} correlations. 
Our class of models takes inspiration from the TTN structure, and we hence refer to them as \emph{process trees}. Their general form is depicted in Fig.~\ref{fig:treeintro}. A key result in this paper is showing that process trees generically exhibit power-law decay for temporal correlations \textit{and} non-Markovianity. {That is, beyond a direct power-law correlation on some system of interest, the memory as mediated by an inaccessible bath has an influence on the system that also decays slowly}. Moreover, process trees can compute correlations between time-local operators with polynomial cost, making the analysis efficient. We substantiate these claims with a series of analytic results and numerical calculations. Although we take inspiration from the spatial TTN geometry, the analogy ends there as the internal structures of the process tree are quite different, stemming from temporal causality constraints.

While the process tree is designed to capture a specific \emph{structure} of temporal correlations, it is not a priori obvious that this implies it is relevant to real physics. In light of this, we showcase how these characteristics may be applied to the study of relevant physical systems by demonstrating that the ansatz can be used to faithfully represent the spin-boson model across a critical phase transition of Berezinskii–Kosterlitz–Thouless (BKT) type~\cite{Bulla_2003,le2008entanglement,LeHurrBook2010}.
This system models physically relevant impurity setups, and generally exhibits long-range temporal correlations. In fact, the bond dimension of the approximate influence matrix for the Ohmic spin-boson model can be shown to have polynomially growing bond dimension~\cite{vilkoviskiy2023bound}, characteristic of critical (power-law) temporal correlations. 
We take a variational ansatz for the process tree and fit it to the true spin-boson model with an optimization approach.
That the tree fits well is in contrast to taking a matrix product operator (MPO) fit -- with a greater number of free parameters -- which is much less capable of describing the multi-time physics across the phase transition. The upshot here is that tailoring the geometry of the model to expectations of the physics permits both a more efficient and a more suitable representation of complex processes. 
Moreover, not only does a process tree fit this model well, but it additionally generalizes. 
Once we determine the elementary building block of the process tree, we can use it to construct large-scale processes. We demonstrate this again for the spin-boson model with surprisingly high efficacy. This suggests that some time and scale invariant properties about the dynamics may be learned and later applied to understand greater instances of those systems. {Finally, we lay out a method for constructing a process tree from an underlying Hamiltonian, combining techniques from the celebrated influence matrix approach to multitime physics~\cite{Banuls2009,Hastings2015,Strathearn2018,Lerose2021,Lerose2021prb}, with the tensor renormalization group method of 2D tensor networks~\cite{Levin_2007,Vidal2007,PhysRevB.78.205116,Evenbly2015,Evenbly_2017}. Therefore, we show both that our ansatz is expressive of relevant systems, and that it can be systematically produced from an underlying model.  }

\begin{figure}[t]
    \centering
    \includegraphics[width=\columnwidth]{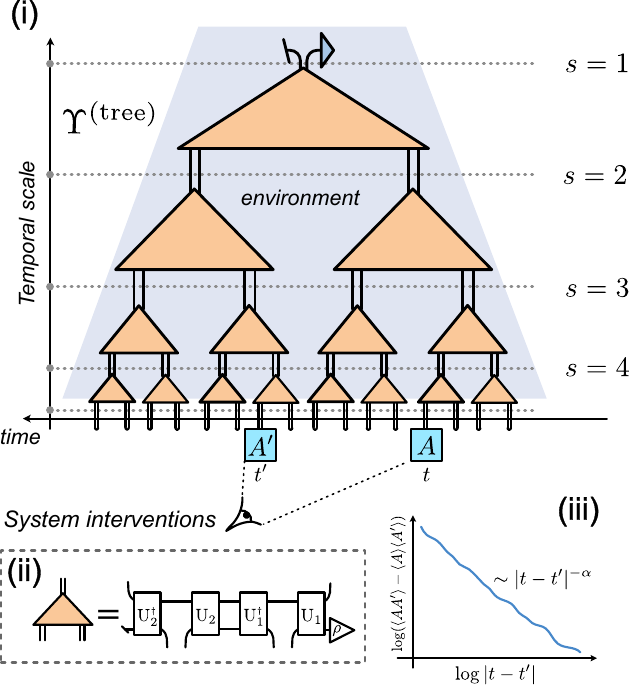}   \captionsetup{justification=justified,singlelinecheck=false} 
    \caption{
    Depiction of the essential ideas found in this work. 
    (i) A process tree describing an open quantum system $S$ interacting with its environment. The process is constructed by a series of iteratively-connected maps (tensors). These maps implement a temporally-consistent fine-graining operation where one intervention at a single time is mapped to two interventions at two times. The vertical extent of the tensor network hence corresponds to a timescale $s$ where interventions may be exponentially more frequent. 
    Each pair of open indices at the bottom of the tree, and pairs of open indices intersected by the dashed lines at any time scale, is an intervention slot where an instrument may be applied. 
    (ii) Each causality-preserving map is parameterized as shown by two unitary maps $U_2$ and $U_1$ and a (vectorized) density matrix $\rho$. 
    (iii) Indicative figure of the hallmarks of polynomially decaying correlations captured by the tensor network. 
    }
    \label{fig:treeintro}
\end{figure}

Consequently, the process tree also serves as a practical path to better theoretical and numerical analyses of complex quantum processes. These results add to the growing number of studies analyzing the richness of open quantum systems, beyond the weak coupling and Markovian regime~\cite{Rivas2012,Wiseman2018,milz2020quantum,PhysRevLett.107.160601,PhysRevLett.118.100401,Lambert,Inchworm,Inchworm2,Ebler}. Indeed, the introduction of process tensors~\cite{processtensor, processtensor2, milz2020quantum} permitted an operational description of multi-time statistics whereby temporal correlations are mapped onto spatial ones, spurring the nascent study of many-time physics~\cite{White2021,Milz2021GME}. Recently, MPO representations have been the chief tool of choice in taming inefficient representations, from cutting-edge numerical techniques~\cite{Banuls2009,Hastings2015,Lerose2021,Lerose2021prb,Strathearn2018,Gu2018,Gu2018A,Pollock2019,pollockscipost,white_non-markovian_2022,gribben_exact_2022, fowler-wright_efficient_2022,Cygorek2022,gribben_using_2022,link2023open,Bertini2023,Thoenniss2023,fux_tensor_2023,ng_real-time_2023} to experimental reconstruction of multi-time processes in a laboratory setting~\cite{Guo2020,White2021}. However, such methods only lead to relative efficiency when correlations grow slowly with time, unless extra approximations and fine-tunings are made.

Our work instantiates a conceptual re-evaluation of the structure of temporal correlations and presents a framework by which scale and renormalization may be understood in a dynamical context. In the spatial case, the fact that TTNs and MERA capture properties of the state at different length scales can be used to extract universal properties of the state, e.g., in the case of critical ground states, the underlying conformal field theory data~\cite{CriticalMERA}. Moreover, these methods can be used to pinpoint quantum phase transitions by computing the fixed points of the renormalization map. The process tree analogously opens up the possibility of developing systematic tools for temporal coarse-graining to identify different phases of quantum processes and their universal properties, i.e., the universality in long-time quantum dynamics.
Indeed, we find that a process tree, endowed with hyperbolic geometry, {is better suited for representing a spin-boson process 
than a linear network.}

The remainder of the paper is organized as follows. In Sec.~\ref{sec:background}, we review basic concepts pertaining to quantum processes and superprocesses alongside graphical notation which is used extensively in this work. In Sec.~\ref{sec:treeconstruct}, 
we construct from first principles our process tree descriptor. The key constituent here is a temporal fine-graining transformation alongside causal consistency conditions.
In Sec.~\ref{sec:multitime}, 
we turn to analyzing the scaling and computational cost of computing multi-time correlation functions from process trees. We prove that generic process trees always exhibit polynomially decaying two-time correlation functions, and how a `causal structure' emerges in the time scale direction of the network, which helps reduce the computational cost and complexity of their software implementation. 
In Sec.~\ref{sec:NM-cor} we consider specifically the non-Markovian properties of process trees, showing that they capture critical correlations mediated by both classical and genuinely quantum baths. {We then turn to showcasing the relevance and applicability of this ansatz to physical models: first in Sec.~\ref{sec:grug}, using variational fitting techniques we demonstrate process trees to be both a suitable and efficient descriptor of the many-time physics found in the spin-boson model across a critical phase transition. Finally, in Sec.~\ref{sec:plaquette} we show how to systematically construct a process tree from an underlying Hamiltonian.} 

\section{Background}\label{sec:background}
{We will now briefly} review the \emph{process tensor} framework~\cite{chiribella_theoretical_2009,processtensor,processtensor2,milz2020quantum}. This is the operational basis for describing multi-time temporal correlation functions in any dynamical open quantum system, including arbitrary non-Markovian phenomena. In doing so, we encounter a hierarchy of increasingly complex objects: quantum states $\rightarrow$ operators $\rightarrow$ channels $\rightarrow$ processes (multi-time channels) $\rightarrow$ superprocesses (maps of multi-time channels). All of these objects are basically instances of \textit{tensors}, namely, multi-dimensional arrays of numbers. Throughout this section, we also introduce the graphical representation of tensor networks, which is used extensively in this paper to provide a compact yet precise representation of expressions with potentially many indices. A more complete introduction to graphical notation and tensor networks can be found in App.~\ref{ap:tn}, and in selection of comprehensive reviews Refs.~\cite{wood_tensor_2015,Bridgeman2017,Orus2019, Cirac2021}. Further details on the process tensor framework can be found in the tutorial of Ref.~\cite{milz2020quantum}.

\subsection{Non-Markovian Quantum Processes} \label{sec:PT}
Consider a controlled quantum system \textit{S} that is interacting with an inaccessible and uncontrollable environment \textit{E}, taken to be described together by the finite-dimensional Hilbert spaces ${\mc{H}_S} \otimes {\mc{H}_E}$. This is the standard setup for open quantum systems.

Without loss of generality,\textsuperscript{\footnote{One can always add a finite-dimensional ancilla to the $SE$ system so that the total isolated system evolves unitarily (Stinespring dilation).}} we assume that the system and environment together constitute a closed system that evolves unitarily in time under the action of a unitary map $\mathcal{U}$,
\begin{equation}\label{eq:schrodinger}
\rho_t = \mathcal{U}_t(\rho_0) = u^\dagger_t (\rho_0) u_t = \mathrm{U}_t |\rho_0\rangle \!\rangle, 
\end{equation}
where $u_t$ is a unitary matrix parametrized by time $t$, and the final equality corresponds to the Liouville superoperator representation, with $\mathrm{U}:=u^* \otimes u$ acting through left matrix multiplication on the vectorized density matrix $|\rho_0  \rrangle$~\cite{OperationalQDynamics}. {Within this (Liouville) representation, when referring to a `system Hilbert space' we mean the doubled space $\mc{H}_S \otimes \mc{H}_S\cong \mathcal{B}(\mathcal{H}_S)$, wherein elements are vectorized density matrices.}

During its evolution, one may intervene instantaneously on the system \textit{S} by applying an arbitrary time-ordered sequence of instruments, leading to a general expression for a multitime correlation function,
\begin{align}
    \braket{\mathcal{A}_{x_k} \cdots \mathcal{A}_{x_1}}_{\ups} &= \tr[\mathcal{A}_{x_k} \mc{U}_{k-1} \cdots \mathcal{A}_{x_2} \mc{U}_1 \mathcal{A}_{x_1} (\rho)] \nn \\
    &=  \langle \! \langle \id | \mathrm{A}_{x_k} \mathrm{U}_{k-1} \dots \mathrm{A}_{x_2} \mathrm{U}_1 \mathrm{A}_{x_1} | \rho \rangle \!\rangle \label{eq:PT}\\
    &= \includegraphics[scale=0.85, valign=c]{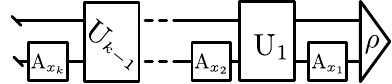}. \nn
    \end{align}
Here, $\mc{U}_i$ represents the global $SE$ dynamics {for time $(t_{i+1}-t_{i})$,} and the top (bottom) wire represents the time-evolving space $\mc{H}_E$ ($\mc{H}_S$). The sequence of system observables are represented by instruments $\{\mathcal{A}_{x_j}\}_{j=1}^k$ with outcomes $\{x_j\}$, where subscript $j$ denotes the time $t_j$.\textsuperscript{\footnote{Here, we have assumed that the interventions act independently of one another, that is, $\mathbf{A}_{k} = \mathrm{A}_{k} \otimes \mathrm{A}_{k-1} \otimes \dots \otimes \mathrm{A}_{1}$, but they can generally be correlated. Graphically, correlated instruments correspond to introducing additional wires between the $A$-boxes, and are called testers~\cite{Chiribella_2008}.}}

An instrument $\mathcal{A}_{x_j}$ is a completely positive trace non-increasing map, with $ \mathrm{A}_{x_k} $ in the second line of Eq.~\eqref{eq:PT} its Liouville superoperator representation. The same distinction holds between $\mathcal{U}$ and $\mathrm{U}$. In this representation the composition of maps simply corresponds to matrix multiplication and therefore can be represented graphically through tensor contractions, as in the final line of Eq.~\eqref{eq:PT}. We will work almost exclusively in this representation in this work. We use the convention of time running from right to left, such that state vectors/kets (dual vectors/bras) have open wires to the left (right),
\begin{equation}
    | \rho \rrangle = \includegraphics[scale=0.85, valign=c]{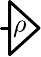}, \quad \llangle \id | = \includegraphics[scale=0.85, valign=c]{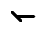},
\end{equation}
where we have also introduced the special notation of an oblique line for the identity projection, corresponding to a (partial) trace. Notice that wires in this representation are {squared} in size, e.g. dimension $4$ for a qubit system $\mc{H}_S$, encoding all degrees of freedom of a {(possibly)} mixed state. See App.~\ref{ap:tn} for further details on graphical notation. 

In computing Eq.~\eqref{eq:PT}, it is not always possible to trace out the environment to obtain a single CPTP map, or even a time-ordered sequence of \textit{uncorrelated} CPTP maps, that completely describe the time-evolution of the system without referencing the environment. Processes that admit such descriptions are called \textit{Markovian} (memoryless). In general, however, the environment retains a memory of the system's past, which influences (correlates with) the future evolution of the system. In this non-Markovian case, it is desirable to have an exact description of all correlations Eq.~\eqref{eq:PT} for any chosen set $\{\mathcal{A}_{x_j}\}_{j=1}^k$.

Through the lens of tensor networks, it is simple to separate the outside interventions $\{ \mc{A}_{x_j} \}$ from the uncontrollable parts of the full system-environment dynamics $\{ \mc{U}_j, \rho \}$ in expression \eqref{eq:PT} to get $\braket{\mathcal{A}_{x_k} \cdots \mathcal{A}_{x_1}}_{\ups} = \langle \! \langle{\ups | \mathbf{A}_{k} }\rangle \! \rangle= \tr[\ups \mathbf{A}_{k}^T  ]${, where the $T$ superscript indicates a transpose}. Here, we implicitly defined the process tensor $\ups_{k}$ and the multitime instrument $\mathbf{A}_k$,
    \begin{align}
            \mathbf{A}_{k} =& \mathrm{A}_{x_k} \otimes \dots \otimes \mathrm{A}_{x_1} \\ 
            =&\quad \includegraphics[scale=0.85, valign=c]{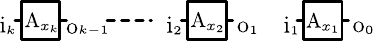}, \nn \\
            \text{ and}
            \nn \\
            \ups_{k} =& \tr_E [\mathrm{U}_{k-1} \star \dots \star \mathrm{U}_1 \star \rho] \label{eq:pt_link} \\
            =&\includegraphics[scale=0.85, valign=c]{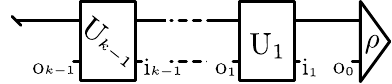}. \nn 
    \end{align}
Above, $\star$ denotes the Link product~\cite{chiribella_theoretical_2009,milz2020quantum}, corresponding to tensor contraction on the $\mc{H}_E$ subspace, and a tensor product on $\mc{H}_S$. We denote by $\mathcal{H}_{S(t_j)}^{\mathrm{i}}$ the input $\mathrm{i}$ to the process on the system Hilbert space, at time $t_j$. In the following, we will discuss mapping between different numbers of input/output pairs, or intervention times. We call an index pair $(\mathrm{i}_{j},\mathrm{o}_{j-1})$ a `time' or `slot' $j$, representing the Hilbert space {$  \mc{H}^\mathrm{i}_{S({t_j})} \otimes\mc{H}^\mathrm{o}_{S({t_{j-1}})}$.} Note that in the multitime (process tensor) picture, the system Hilbert space at different times are independent spaces, as is apparent from the graphical representation in Eq.~\eqref{eq:pt_link}. Through an abuse of notation, we will later take the times to be discrete, such that they label the $1^{\mathrm{st}}$, $2^{\mathrm{nd}}$, $3^{\mathrm{rd}}$, \dots, $k^{\mathrm{th}}$ intervention, $t_ j \in \{ 1,2,\dots,k\} $. 

$\ups_{k}$ in Eq.~\eqref{eq:pt_link} is the Choi state representation of the process tensor, i.e., a $(2k-1)-$body quantum state, where each `body' corresponds to either an input or output index at an intervention time. In other words, it is possible to show that it has all properties of a (supernormalized) density matrix
\begin{equation}
  \ups_{k} \ge 0, \quad \ups_{k}=\ups_{k}^\dag, \quad \tr[\ups_{k}]=d^{2k-1}.  \label{eq:normalization}   
\end{equation}
This can also be understood from the Choi-Jamio\l kowski isomorphism, i.e., the process tensor results from feeding in half of a maximally entangled state at each intervention time and collecting all of the outputs~\cite{processtensor}.
However, while process tensors are isomorphic to a quantum state, the converse is not true. A further sequence of affine constraints enforce the causal influence of interventions. These causality conditions are iteratively expressed by
    \begin{align}
             &\tr_{\mathrm{o}_j}[ \ups_{j} ] =   \id_{\mathrm{i}_j} \otimes \ups_{j-1}   \quad \forall \quad 1\leq j \leq k \nn \label{eq:causality}\\
            \text{with } \qquad  & \tr_{\mathrm{o}_0}[\ups_{0}] = 1,
    \end{align} 
i.e. tracing over a final output leg `commutes through' to the previous input. This ensures that an instrument applied at a given time cannot causally influence the statistics of any instrument preceding it in the past. Physically, discarding information at the latest available time step separates the preceding leg from the rest of the tensor, equivalent to the stochastic process property where the past should be unaffected by actions averaged across the future~\cite{milz2020quantum}. This isomorphism between a multitime process and a quantum state is a key insight, allowing for operationally meaningful notions of non-Markovianity~\cite{processtensor,processtensor2,milz2020quantum,white_non-markovian_2022} and genuinely quantum memory~\cite{Milz2021GME,taranto2023characterising}. It further provides a natural setting for studying optimal control~\cite{white_demonstration_2020,fux_efficient_2021, Berk2021resourcetheoriesof,berk2021extracting,butler_optimizing_2023}, as well as foundational questions regarding when large, chaotic systems become Markovian (simple)~\cite{FigueroaRomero_Modi_Pollock_2019,FigueroaRomero_Pollock_Modi_2021,Dowling2021,finitetime,Dowling2022}.

We stress that a particularly important instrument in the context of the process tensor formalism is the (trivial) identity map,
\begin{equation}
    \cup := \id \otimes \id = \includegraphics[scale=1.5, valign=c]{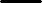},
\end{equation}
which simply corresponds to connecting input/output indices at a time step; a `do-nothing' operation. This is operationally distinct from a full measurement and averaging operation (tracing), in contrast to the classical case where these are equivalent~\cite{strasberg2019,Milz2020prx}. These facts forms the basis of the process tensor constituting the quantum generalization of stochastic processes~\cite{Milz2020kolmogorovextension,milz2020quantum}.


\subsection{Operational Non-Markovianity From a Process Tensor}
A key advantage of a process tensor representation of a dynamics is the ability to define non-Markovianity measures in an operational unambiguous way. One particularly elegant choice is the quantum mutual information $\eta$ between two channels of the process, with identity superoperators inserted everywhere else. 

The quantity $\eta$ is based on introducing a causal break on the $S$ space such that any remaining mutual information must be mediated by the environment $E$, leading to an instrument-independent measure {of} non-Markovianity. Explicitly, we find the marginal process tensor on two channels labeled $a$ and $b$, and find the quantum mutual information between them
\begin{equation}
    \begin{split}
        \eta(\ups; t_a, t_b)&:=S(\ups_{b,a}\| \ups_a \otimes \ups_b )\\
    &=S(\ups_{a})+S(\ups_{b})-S(\ups_{b,a}), \label{eq:NM1}
    \end{split}
\end{equation}
with
\begin{equation}
    \ups_{b,a} =\includegraphics[scale=0.85, valign=c]{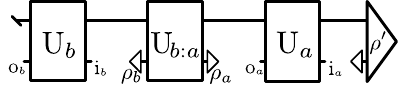}, \label{eq:NM2}
\end{equation} 
and where $\mathrm{U}_{b:a}:= \mathrm{U}_{b-1}\cdots \mathrm{U}_{a+1}$ and $\rho^\prime := \mathrm{U}_{a-1} \dots U_1 |\rho \rangle \! \rangle $. Then, $\ups_a := \tr_{b}[\ups_{b,a}]$ is the reduced state of $\ups_{b,a}$ on $\mathcal{H}_a = \mathcal{H}_{S(t_a)}^{\mathrm{i}}\otimes \mathcal{H}_{S(t_a)}^{\mathrm{o}}$, and $S(\sigma) := -\tr[\sigma \log \sigma ]$ is the von Neumann entropy. The independent preparation $| \rho_a \rrangle$ and measurement $\llangle \rho_b |$ together represent a `causal break', distinguishing temporal correlation within $\mc{H}_S$ from genuine non-Markovianity transferred through the environment $\mc{H}_E$. This measure Eq.~\eqref{eq:NM1} is operationally meaningful both as a distance to the closest Markovian process, and as the exponential scaling of the probability of mistaking your process as a Markovian one after $N$ experiments, $\mathbb{P} \sim \exp[- \eta N ]$~\cite{processtensor2,milz2020quantum}. It also agrees with the classical limit~\cite{processtensor2}.

The process tensor description we have detailed above is \textit{universal}. Namely, the process tensor can describe \textit{any} quantum process, Markovian or non-Markovian, with any amount of temporal correlations or memory (including `slow', polynomially decaying correlations), provided the dimension of the environment wire (Hilbert space) $d_E$ is allowed to be arbitrarily large. This can be seen clearly from the  graphical representation in Eq.~\eqref{eq:pt_link}. The process tensor there has a `tensor-train-like' internal structure, i.e. the form of an MPS~\cite{processtensor,Gu2018,milz2020quantum}, albeit it is a not pure state as it represents open quantum dynamics. When the environment dimension is fixed but the number of time steps {(denoted by $k$)} scales, the process tensor corresponds to a matrix product operator (MPO) with a bond dimension equal to $d_E$. It can be shown that such matrix product processes with finite $d_E$ have a \textit{finite} temporal correlation length, such that arbitrary correlations and indeed non-Markovianity decay exponentially; see App.~\ref{ap:linear}. This then raises the question, whether a process tensor can be rearranged into an alternative tensor network geometry? And in doing so can we efficiently model slowly decaying temporal correlations? This is the main task ahead in this work. We will see that a tree tensor network geometry provides an efficient and natural representation of processes with strong memory. To undertake this challenge, we first need to introduce the higher-order class of objects which map between process tensors.

\subsection{Quantum Superprocesses}
The basic building block of process trees, which we {define} in Sec.~\ref{sec:treeconstruct}, is a \textit{quantum superprocess} -- a causality and positivity-preserving linear map between two process tensors and/or (correlated) instruments~\cite{chiribella_theoretical_2009,chiribella_probabilistic_2010,Berk2021resourcetheoriesof}. In this work we consider the subclass of superprocesses which map between process tensors with a discrete number of time slots. As a map, such a superprocess acts by composition on the intervention slots of a process tensor and transforms it into a possibly different process tensor. In the Liouville representation, the action of such a superprocess is realized by contracting the superprocess represented as a tensor with the corresponding indices of the process tensor. We will now detail the minimal structure of this tensor.

Given a causal ordering across wires, a superprocess can be written as a sequence of CPTP maps acting on a combined input, output, and possibly ancilla Hilbert spaces. For example, a non-trivial superprocess between two input intervention slots {in a process tensor}, made of pairs of wires $(i,j)$ and $(k,l)$, and two output intervention slots on the {resulting process tensor}, $(i',j')$ and $(k',l')$, is
\begin{equation}
\mc{D}:=\includegraphics[scale=1.5, valign=c]{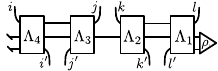}, \label{eq:D}
\end{equation}
where $\Lambda$'s are CPTP maps, {the wire at the top (bottom) corresponds to the input (output) spaces, the middle wire corresponds to an ancilla,} and $\rho$ representing some initial state of the ancilla plus the output space. The top open wires are time-ordered relative to the bottom open wires as $i \leftarrow i' \leftarrow j' \leftarrow j \leftarrow k \leftarrow k' \leftarrow l' \leftarrow l$. This then maps a process tensor to a generally different but physical process tensor on the same name of times, modified {time-locally} around the two times, $\mc{D} | \ups_{k} \rrangle = | \ups_{k}^\prime \rrangle$.

However, this is not the unique choice. Other two-time to two-time superprocesses are possible, corresponding to different \textit{relative} causal orders of the wires while keeping the causal orders of the inputs (outputs) fixed. For instance, the following superprocess,
\begin{equation}
\mc{F}=\includegraphics[scale=1.5, valign=c]{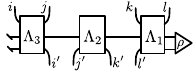}, \label{eq:D1}
\end{equation}
corresponds to the relative causal order $i \leftarrow \{ j,i' \} \leftarrow j' \leftarrow k' \leftarrow \{  k,l'\} \leftarrow l$. For example, for this causal ordering, state preparations on the index $k^\prime$ cannot influence the measurement statistics on the output space outgoing index $k$. This is in contrast to the causal ordering of Eq.~\eqref{eq:D}.

Superprocesses between an arbitrary number of {input and output} slots {(times)} can be constructed analogously, with each slot consisting of an in-going (to a CPTP map) wire and an out-going (from a CPTP map) wire such that the out-going wire is to the future of the in-going wire. Different possible superprocesses are enumerated by the different causal orders of all the wires (input and output) that fulfil this constraint.

In the following, we will be concerned with {superprocesses} which map from one-time to two-time. This will be the building block with which we construct the process tree, the main object of study in this work.

\section{Construction of process trees}\label{sec:treeconstruct}
We now formally construct a class of quantum processes with long-range temporal correlations, {as depicted in Fig.~\ref{fig:treeintro}}. Specifically, we first identify a general one-time to two-time temporally local superprocess, the building block from which we can iteratively construct process trees. Then, we next introduce a consistency condition. This leads both to an interpretation of the resultant superprocess as a change of scale transformation, as well as convenient numerical and analytic properties which we exploit in Section~\ref{sec:multitime}. 

{\subsection{Temporally Local Fine-graining of Processes} }\label{sec:y-brick}
The building blocks of a process tree are \textit{fine-graining} superprocesses, namely, causality-preserving maps that \textit{locally} (in time) transform a process tensor with $k$ intervention slots to a process tensor with $k+1$ intervention slots. A (time-)local superprocess acts on a single slot of a process tensor, say the $t^{\mathrm{th}}$ slot, without modifying the process to the past or future of the slot. In other words, the transformed process differs from the input process only at slot $t$.

Such a superprocess is a map from a single intervention slot to two slots,\textsuperscript{\footnote{One could also consider a fine-graining superprocess that maps an intervention slot to more than two slots. The process ansatz obtained from such higher-order fine-graining maps has all the same structural properties as the one presented in this paper. We chose one-to-two slot intervention maps in order to generate a simple binary tree tensor network.)}} $\mathcal{Y}: \mc{H}_{S_c}^{\otimes 2} \rightarrow \mc{H}_{S_f}^{\otimes 4}$, where $\mc{H}_{S_c}$ denotes the system space at the coarse level, and there are two copies for the input/output indices; see Section~\ref{sec:PT}. As remarked in the previous section, the causal ordering between the input and output slots and a choice of ancilla space almost entirely fixes the structure of the map (conditions (1)-(3) described below Eq.~\eqref{eq:D1}). In particular, we demand that all the information fed into the single, coarse input intervention slot should be able to affect the full measurement statistics of two fine interventions, and the latter can, in turn, influence the future of the process at the coarse level (via index $\mathrm{i}_c$ below). The most general map $\mathcal{Y}$ is parameterized by three CPTP maps $\Lambda_1, \Lambda_2,$ and $\Lambda_3$, an ancilla space (represented by the middle wire), and a preparation $\rho$ as depicted below: 
\begin{align}
        \mathcal{Y}^{\mathrm{i}_c \mathrm{o}_c}_{\mathrm{i}_{f} \mathrm{o}_{f} \mathrm{i}_{\bar{f}} \mathrm{o}_{\bar{f}} } &:=\includegraphics[scale=1.5, valign=c]{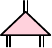} \nn\\
        &= \includegraphics[scale=1.5, valign=c]{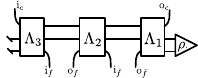}  \label{eq:y_brick}
\end{align}
where the upper (lower) indices $\mathrm{i}_c$ ($\mathrm{i}_{f}$) represent coarse (fine) temporal scales. For notational simplicity we will often drop the indices and discuss the full tensor (superprocess), where the input/output spaces will be clear graphically, and the full one-to-two superprocess is succinctly represented as a colored triangle. We prove explicitly in App.~\ref{ap:superprocess} that $\mathcal{Y}$ is indeed a superprocess that maps a process tensor with $k$ time steps to a process tensor with $k+1$ times, fulfilling the positivity and causality constraints, Eqs.~\eqref{eq:normalization}-\eqref{eq:causality}.

For a single time slot, a process tensor is physically equivalent to encoding all possible measurements and observables with respect to some reduced state $\rho_0$,
\begin{equation}
    \tree_0 = \includegraphics[scale=1.8, valign=c]{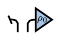} \label{eq:prepare}
\end{equation}
A process tree is then obtained by recursively fine-graining this simple case by applying a superprocess $\mc{Y}^{(j)}_s$, thus, refining the single intervention slot of the single-time measurement process into a \textit{multi-time} intervention space. That is, the first fine-graining map is denoted $\mc{Y}^{(1)}_1$, resulting in a process with two intervention slots, while the next fine-graining from $\mc{Y}^{(1)}_2$ and $\mc{Y}^{(2)}_2$ of each of these two slots results in a four-time process, and so on. After $N-1$ steps, we obtain a process tree with $2^N$ intervention slots. Fig.~\ref{fig:generaltree} illustrates a process tree with $2^{N=4} = 16$ intervention slots. Note that a process tree is explicitly dependent on the set $\{ \mc{Y}^{(j)}_s\}$ with indices $1\leq s \leq N$ and $1 \leq j \leq 2^{s-1}$, the initialization $\rho_0$, and the choice of local dimension $d_s$ at each scale $s$. We will often drop subscripts for notational simplicity. 

{The above temporal fine-graining transformations lead to tree tensor networks with} `baked-in' causality (Eq.~\eqref{eq:causality}).
Not only is the output of the final iteration a legitimate process tensor, but each iteration of the above fine-graining procedure produces a quantum process with twice as many intervention times as the previous step. Therefore, fine-graining produces a \textit{sequence} of quantum processes with doubling intervention slots, 
\begin{equation}\label{eq:sequence}
\tree_0 \rightarrow \tree_{1} \rightarrow ... \rightarrow \tree_N.
\end{equation}
Here, $\tree_{s}$ is the process at scale $s$, $\tree_0 $ is the initial prepare process at the coarsest scale, and $\tree_N$ is the process obtained at the finest time scale with $2^N$ intervention slots; see Fig.~\ref{fig:generaltree}.


\begin{figure}[t]
    \centering
    \includegraphics[width=7.5cm]{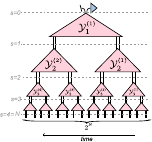}
    \captionsetup{justification=justified,singlelinecheck=false} 
    \caption{Recursive fine-graining of the initial single-time measurement process using $\mc{Y}$-type fine-graining maps generates the generic process tree, here shown for $N=4$ scales. The tensors $\{\mc{Y}^{(j)}_s\}$ are organized according to the $jth$ tensor at time scale $s$.}
    \label{fig:generaltree}
\end{figure}

\subsection{Consistency in Change of Scale} \label{sec:coarse}
{In order to interpret the different layers in Fig.~\ref{fig:generaltree} as temporal `scales', we will now impose a consistency condition on the $\mc{Y}-$bricks from Eq.~\eqref{eq:y_brick}. This will allow us to interpret the hierarchy of process trees in Eq.~\eqref{eq:sequence} as different temporal scales of the same physical process }

To understand why extra structure is desirable, consider a height $N=4$ process tree as shown in Fig.~\ref{fig:generaltree}, and consider the application of {(a sequence of)} instruments at a coarser scale, e.g. the scale $s=2$. There are two natural choices in how to extract such a correlation function from a $N=4$ height process tree, one could either: (i) `chop' the shown process tree and compute the correlators on the coarse process tree consisting of only the three superprocesses $\{ \mc{Y}^{(1)}_1,\mc{Y}^{(1)}_2,\mc{Y}^{(2)}_2\}$, on layers $s=0,1,2$. This means we essentially ignore the finer scales $s=3,4$; or (ii) insert `do-nothing' operations on the finest scale ($s=4$), contract the process tree to $s=2$, and then insert appropriate instruments; this will generally depend on the full set $\{ \mc{Y}^{(j)}_s \}_{s=1}^4$. These two choices are equally valid physically, but are generally different. We choose the following condition to impose that these situations (i) and (ii) are equivalent,   
\begin{align}
        &\mc{Y}^{\mathrm{T}} | \cup \rrangle_f \! | \cup \rrangle_{\bar{f}} = | \cup \rrangle_c \nn \\ 
        \iff & \sum_{\mathrm{i}_{f} \mathrm{o}_{f} \mathrm{i}_{\bar{f}} \mathrm{o}_{\bar{f}}} (\mathcal{Y}^{\mathrm{T}})^{\mathrm{i}_c \mathrm{o}_c}_{\mathrm{i}_{f} \mathrm{o}_{f} \mathrm{i}_{\bar{f}} \mathrm{o}_{\bar{f}} } \delta_{\mathrm{i}_{f} \mathrm{o}_{f}} \delta_{ \mathrm{i}_{\bar{f}} \mathrm{o}_{\bar{f}}} =  \delta^{ \mathrm{i}_{c} \mathrm{o}_{c}} \label{eq:isometry} \\
        \iff & \includegraphics[scale=1.5, valign=c]{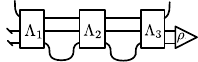} = \includegraphics[scale=1.5, valign=c]{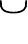}. \nn
\end{align}
For clarity here we have written the same condition in three equivalent representations, respectively: superoperator, index, and graphical. We call this the \textit{scale consistency condition}. {The superscript `$\mathrm{T}$' means transpose, as by convention we have take the $\mc{Y}$ to act from top-to-bottom as a `fine-graining' map. Such notation is conveniently not necessary in the graphical representation.} We call process trees composed of superprocesses satisfying Eq.~\eqref{eq:isometry} $\mc{W}-$type, and these will be the focus of the rest of the paper. 

We stress that imposing the scale consistency condition is a choice, and $\mc{Y}-$type process trees composed of superprocesses which do not satisfy Eq.~\eqref{eq:isometry} may describe interesting physical phenomena. Further, we could define a scale consistency condition with respect to different instruments $\mathrm{V}$, such that $\mc{Y}^{\mathrm{T}}|{\mathrm{V}}\rrangle |{\mathrm{V}}\rrangle  = |{\mathrm{V}}\rrangle$. If $\mathrm{V}$ corresponds to a unitary map, then the resultant process tree is equal to a $\mc{W}-$type process tree, up to (temporally) local transformations only at the coarsest and finest scales, $s=0$ and $s=N$ respectively; see App.~\ref{ap:y-type}. {It remains to be seen whether different scale conditions could lead to relevant models; $\mc{Y}-$type process trees serve as a general base from which to study this.  }

Eq.~\eqref{eq:isometry} is, however, a natural choice.\textsuperscript{\footnote{We do not claim that this choice will capture all physics. There may be superprocesses that do not preserve the identity operations, that capture interesting physics.}} Intuitively, we are imposing that a `do nothing' operation at a fine scale corresponds to `do nothing' at a coarse scale. This leads to a number of nice properties which we investigate in the remainder of this paper. In particular, we see a kind of causal cone structure in the scale direction, for temporally local quantities. This is represented diagrammatically in Fig.~\ref{fig:causalcone1}, where the blue colored region represents all the tensors that need contracting in order to compute a single-time expectation value. This means that local quantities will depend only on a restricted number of tensors in the process tree, linear in the height of $N$ of the tree, compared to the general $\mc{Y}-$class tree which involves at worst contractions of all $O(2^N)$ bricks (the entire tensor network). This leads to computations using the process tree that are computationally efficient, and even analytically tractable in certain asymptotic regimes (see Thm.~\ref{thm:polyDecay}). We will study this next in Sec.~\ref{sec:multitime}. 

We can intuitively understand the consequences of Eq.~\eqref{eq:isometry} by examining the time-resolved description of a process tree. Here, tensors at different time scales in the network, indicated by the dashed lines in Fig.~\ref{fig:generaltree}, capture properties of the process at different time scales. More specifically, if the system is intervened on time scales only longer than a scale $s$, then all the relevant properties -- such as multi-time correlation functions -- of the process can be calculated entirely from the tensors located above scale $s$ in the network. The remaining tensors, corresponding to shorter time scales, can be then discarded.

We will explicitly incorporate the constraint Eq.~\eqref{eq:isometry} by choosing $\Lambda_1 = U_2, \Lambda_2 = U_2^\dagger U_1$, and $\Lambda_3 = U_1^\dagger$ in Eq.~\eqref{eq:y_brick}, where $U_2$ and $U_1$ are unitary maps. Explicitly, 
\begin{align}
        \mathcal{W} &=\includegraphics[scale=1.5, valign=c]{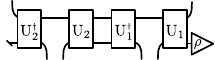}  \label{eq:w_brick} =:\includegraphics[scale=1.5, valign=c]{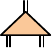}
\end{align}
 It is readily checked that the superprocess $\mathcal{W}$ satisfies Eq.~\eqref{eq:isometry}. The map $\mc{W}$ can be viewed as a fine-graining transformation of processes (given its property as a superprocess),
\begin{equation}
\includegraphics[scale=3.25, valign=c]{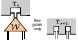}, \label{eq:fineone}
\end{equation}
while the transpose map $\mc{W}^{T}$ acts as a coarse-graining transformation between scales on instruments (given the scale consistency condition Eq.~\eqref{eq:isometry}):
\begin{equation}
\includegraphics[scale=3.25, valign=c]{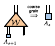} \label{eq:coarseone}
\end{equation}
{Here, through abuse of (graphical) notation, we have taken the gray boxes to represent process tensors that may be any size, as defined in Eq.~\eqref{eq:pt_link}. Only the one input and two output time slots of this process are relevant here, as these superprocesses are (time) local. $s$ refers to the temporal scale (as in Fig.~\ref{fig:generaltree}).} Note that the transpose map $\mc{W}^T$ coarse-grains instruments, not processes, {with $|A_s\rangle \! \rangle = \mathcal{W}^{\mathrm{T}} |A_{s+1}\rangle \! \rangle |\cup \rangle \! \rangle$}. Ostensibly, a coarse-graining map for processes must be a superprocess (causality preserving) that should \textit{invert} the action of $\mathcal{W}$ as a fine-graining map on a process. While it is unlikely that such an inverse exists for all $\mathcal{W}$, it is an open question what properties of $U_2$ and $U_1$ in Eq.~\eqref{eq:w_brick} (or of a more general $\mc{Y}$) imply the existence of an inverse. Unitaries satisfying the so-called `dual-unitary property' appear to be a promising candidate~\cite{Bertini2019exact}, but we leave a detailed study of this question to a future work. 

In the remainder of this paper, we focus exclusively on process trees composed only of the $\mathcal{W}$-type superprocesses defined through Eq.~\eqref{eq:w_brick}, and use `process tree' to refer hereon only to such processes. More details on process trees without the scale consistency condition ($\mc{Y}-$type) can be found in App.~\ref{ap:y-type}.

\section{Correlation Functions of Process Trees}\label{sec:multitime}
{In this section, we describe how $k$-time correlation functions of a process tree can be computed efficiently, and showcase their behavior. 
In particular, we show both numerically and analytically that temporal correlations of a generic process tree decay with a characteristic power-law. }

\begin{figure}[t]
    \centering
    \includegraphics[width=\linewidth]{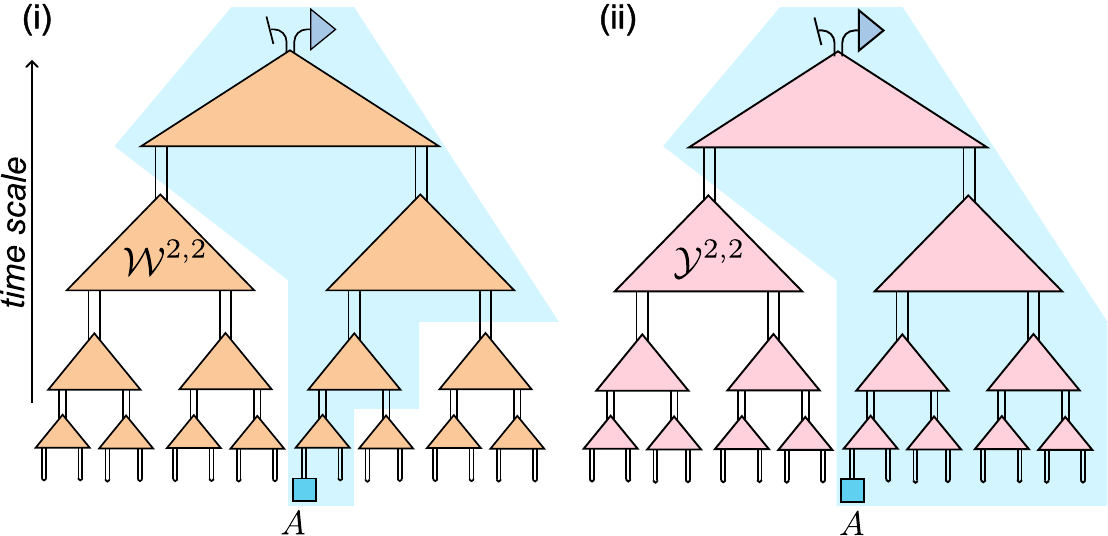} \captionsetup{justification=justified,singlelinecheck=false} 
    \caption{(i) The scale-causal cone  (shaded blue) of a single intervention slot in a process tree composed from $\mc{W}$-type maps, illustrating the emergent causal structure also in the scale direction. Only tensors inside the causal cone influence the intervention slot (for instance, the expectation value of an intervention $A$ applied on that slot). The causal cone is comprised of exactly one tensor from each scale. (ii) An intervention slot in a generic process tree, composed from $\mc{Y}$-type maps is influenced by all past tensors at all scales. }
    \label{fig:causalcone1}
\end{figure}

\subsection{One-time expectation values and emergent causal structure along scale} \label{sec:one-time}
Fig.~\ref{fig:causalcone1} illustrates the tensor network contraction that evaluates the expectation value of a single instrument $A$. We see that the instrument is contracted with one slot of the tree, while the remaining slots are contracted with the Identity. Thanks to the scale consistency condition fulfilled by $\mathcal{W}$-type maps,  Eq.~\eqref{eq:isometry}, in this case the total contraction that evaluates the expectation value simplifies significantly. For instance, the contraction depicted in Fig.~\ref{fig:causalcone1}(i) reduces to 
\begin{align}
    \braket{A}_{\tree}&=\includegraphics[scale=1.2, valign=c]{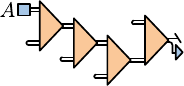} \label{eq:1-time}\\
    &= \llangle A |   \mc{W}^{(4)}_{4,L} \mc{W}^{(2)}_{3,L} \mc{W}^{(1)}_{2,L} \mc{W}^{(1)}_{1,R} |{\rho}\rrangle |{\id}\rrangle,
\end{align}
where $\llangle A | = \llangle{\phi^+} | \mc{A}^\dg \otimes \id$ is the dual Choi state of some instrument $A$, and 
\begin{equation}
\mathcal{W}_R = \includegraphics[scale=1.2, valign=c]{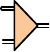}, \, \mathcal{W}_L = \includegraphics[scale=1.2, valign=c]{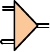}.\label{eq:Tdefa}
\end{equation}
Note that while graphically we have depicted tensors at different scales with congruent triangles, each superprocess $\mc{W}^{(j)}_s$ could be different in general.

From Eq.~\eqref{eq:1-time} we can readily estimate the computational cost of computing single-time correlation function in a process tree. In particular, $\braket{A}$ depends only on a linear subset of tensors in the process tree; a single superprocess $\mc{W}$ per `scale' $s$. Therefore, for a height $N$ process tree $\tree_N$ it reduces to a contraction of $N$ tensors $\mc{W}^{(j)}_s$. Assuming that each superprocess $\mc{W}^{(j)}_s$ has the same `bond dimension' $d$ (i.e. same input and output space dimensions between scales), $\braket{A}$ requires $N d^6$ operations. See App.~\ref{ap:tn} for further details about estimating the computational cost of tensor network contractions, and App.~\ref{ap:numerics} for further details on contracting the tree.

This contraction behavior can be interpreted as an emergent a causal structure in the bulk of the tensor network. We call the set of tensors that influence an intervention slot the \textit{scale causal cone} of the slot, depicted in Fig.~\ref{fig:causalcone1}. For the $\mc{W}-$brick (satisfying Eq.~\eqref{eq:isometry}), the scale-causal cone of any intervention slot consists of exactly one tensor at each time scale. Note that this emergent causal structure in the scale direction is distinct from the causal structure in the time direction. The latter results from the process tree fulfilling the causality constraints, Eq.~\eqref{eq:causality}. Recall that the most general process trees -- those composed of $\mathcal{Y}$-maps -- fulfil causality constraints along the direction of the intervention time, such that future dynamics have no influence on past interventions, but they do not have this kind of scale causal structure in the scale direction. That is, in a process tree made from $\mc{Y}$-type tensors, an intervention slot is influenced by all the past tensors at \textit{all} scales, as illustrated in Fig.~\ref{fig:causalcone1}(ii). This is detailed further in App.~\ref{ap:y-type}.

The scale causal cone that manifests in process trees can be understood as a temporal analog of the causal cone in spatial tree tensor networks and MERA representations of quantum many-body states. In the spatial case, the causal cone structure results from the local isometric property of the tensors, namely, the product of each tree or MERA tensor component with its hermitian adjoint equates to the Identity. In the present case of process trees, the causal cone structure results instead from enforcing the scale consistency condition, Eq.~\eqref{eq:isometry}. Spatial tree tensor networks and MERA, equipped with their respective causal cone structures, have been interpreted as encoding an emergent holographic two-dimensional anti de-Sitter geometry \cite{swingle_entanglement_2012,Bao2017}. We remark that the emergence of the causal structure in process trees might also lead to an analogous holographic description of quantum processes, but further exploration of this feature of process trees is beyond the scope of the current paper.

\subsection{Two-time correlators}
Next, let us analyze how two-time correlators generally scale with the duration $\Delta t = t'-t$ between the two interventions. {Fig.~\ref{fig:causalcone2} shows the tensor network contraction that equates to the correlator between two instruments $A$ and $A'$ applied on time slots $t$ and $t' > t$, respectively. Notice that the scale-causal cones of $A$ and $A'$ overlap beyond some scale $s$. This can be implemented efficiently numerically, leading to a computational cost $\sim\mc{O}(\mbox{log}_2(N))$ similar to the single-time case (see App.~\ref{ap:numerics}). }
It is not apparent that enforcing causality and the coarse-graining constraints at the level of individual tensors, as in a $\mc{W}$ process tree, preserves the polynomial decay (critical behavior) of correlations that originates in the tree geometry of the network. However, we find that these constraints are, in fact, compatible with polynomially-decaying correlations. In Fig.~\ref{fig:numerics}(ii), we plot the averaged two-time correlator in a height $N=8$ uniform process tree. A uniform process tree is composed of $\mc{W}^{(j)}_s \equiv \mathcal{W}$ for all $s,j$, where for the numerical results $\mathcal{W}$ is composed of randomly sampled (according to the Haar measure) unitary maps $U_2$ and $U_1$ in Eq.~\eqref{eq:w_brick}. For each value of separation, $\Delta t = t' - t$, we averaged the correlators of fixed instruments $A$ and $A'$ inserted at all possible intervention slots $t$ and $t' > t$ such that $t' - t = \Delta t$. These results demonstrate that temporal correlations in process trees decay polynomially, and moreover, this feature is generic. Namely, this behavior is a structural property of the class of processes (specifically, the tree geometry of the network \cite{evenbly_tensor_2011}) and does not require fine-tuning of the tensors.

\begin{figure}[t]
    \centering
    \includegraphics[width=8cm]{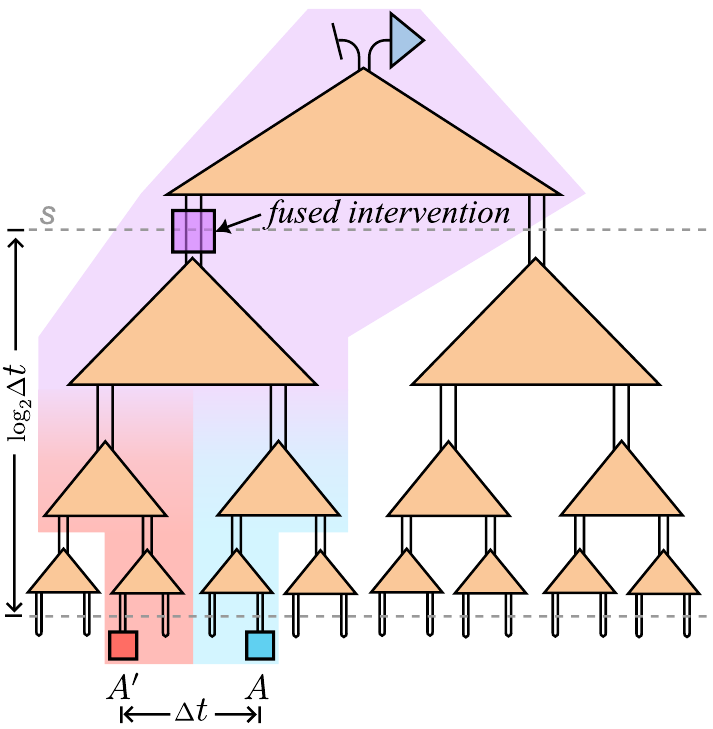} \captionsetup{justification=justified,singlelinecheck=false} 
    \caption{A diagrammatic representation of the contractions involved in computing two-point correlations in a process tree. The joint scale-causal cone of two interventions $A$ and $A'$ applied on slots $t$ and $t^\prime >t$, respectively. The correlator of the two interventions depends only on the tensors inside the causal cone. The two-time correlator of $A$ and $A'$ equates to a one-time expectation value at a scale $s$ where $A$ and $A'$ fuse together after coarse-graining $s$ times. On average across all sites $\{ t, t^\prime \} $ such that $t^\prime - t = \Delta t$, the number of required coarse-graining moves until the operators `fuse' is $s \sim \log_2(\Delta t) $. See App.~\ref{ap:numerics} for a more detailed explanation of the efficient computation of correlation functions. }
    \label{fig:causalcone2}
\end{figure}

\begin{figure*}[t]
    \centering
    \includegraphics[width=\linewidth]{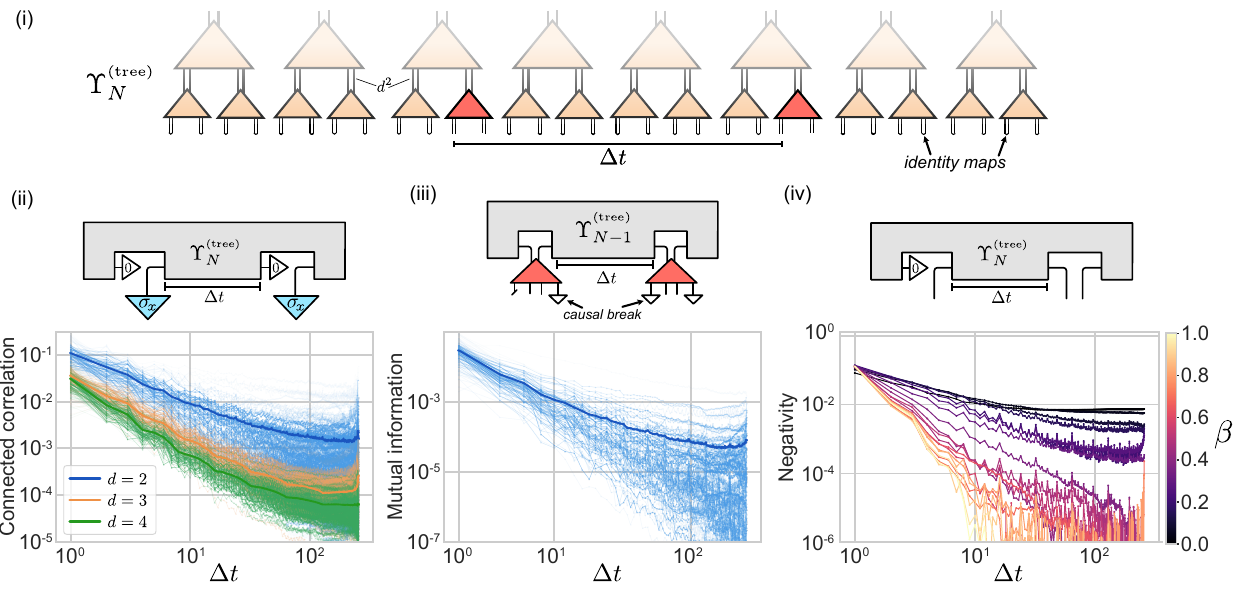}
    \caption{Numerical Results for general process trees. (i) All quantities shown in the plots can be derived from a process tree with identity maps $\cup$ inserted everywhere, except for the two $\mc{W}$ tensors which are $\Delta t$ apart. (ii) Connected two-point correlations for Haar random samplings of the unitary maps $U_1$ and $U_2$, and $\rho$ from Eq.~\eqref{eq:w_brick}. The process tree exhibits polynomially decaying correlation between two instruments applied on the open slots at scale $s=0$. The simulation was performed for a tree of depth $N=8$, which contains $2^8 = 256$ intervention times. $d$ is the dimension of system at each scale; equivalently $d^2$ is the `bond dimension' of the tensor network representation of the process tree. We chose the Hermitian observables $A = A^\prime $ to be a Pauli $\sigma_x$ measurement on the output index, and an independent preparation of the $\ket{0}$ state on the input index for each intervention time. The first Gell-Mann matrix and $\sigma_x \otimes \sigma_x$ was instead measured for $d=3$ and $d=4$ respectively. {Such observables take expectation values $|\braket{A}|\leq 1$, and so the presented data represents a significant range}. Each faded line is a single run of the simulation {for $100$ runs per plot,} and the bolded line is the average of the runs. (iii) Quantum mutual information (Eq.\eqref{eq:NM1}) measure for non-Markovianity between reduced channels in a homogeneous process tree, produced for random, homogeneous process trees as in the two-point correlation results. We see power-law decay of non-Markovianity, indicating that memory follows the same trend as general correlations for particular instruments. (iv) Negativity witness (Eq.~\eqref{eq:negativity}) of entanglement in time for the case of process trees with unitary maps $U_2$ and $U_1$ generated by the Hamiltonian in Eq.~\eqref{eq:heisenberg}. We see that negativity, hence entanglement in time, decrease as we increase randomness in the process. As $\beta$ increase, the time $\Delta t$ beyond which entanglement in time is zero decreases, but for all cases we see power-law trend.}
    \label{fig:numerics}
\end{figure*}

Furthermore, in App.~\ref{ap:proof}, we prove the following statement pertaining to the asymptotic scaling of connected correlators in a uniform process tree. 
\begin{restatable}{thm}{polyDecay} \label{thm:polyDecay}
    Consider a uniform process tree $\tree_N$ of $2^N$ intervention times, composed from a `generic' fine-graining superprocess $\mathcal{W}$ defined in Eq.~\eqref{eq:w_brick}, and two instruments $A_t$ and ${A'}_{t'}$ applied on slots $t=t_0$ and $t^\prime=t_n >t_0$ with $\Delta t = t_n - t_0 = 2^n-1$ ($n$ is a positive integer). Then in the asymptotic limit $N,n \rightarrow \infty$,
    \begin{equation} \label{eq:poly_exact}
        |\braket{A_{t_n}^\prime A_{t_0}}_{\tree_N} - \braket{A_{t_n}^\prime }_{\tree_N}\!\braket{A_{t_0}}_{\tree_N}  | {\sim} \Delta t^{-\alpha},
    \end{equation} 
    where $\alpha >0$. The sufficient conditions on the superprocess $\mc{W}$ is defined in terms of spectral properties of a derived transfer matrix, given in Eq.~\eqref{eq:Tdefa}.   
\end{restatable}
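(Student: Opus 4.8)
The plan is to exploit the scale-causal-cone structure established for $\mc{W}$-type trees (Figs.~\ref{fig:causalcone1}--\ref{fig:causalcone2}) to reduce the full $2^N$-slot contraction to a quasi-one-dimensional transfer-matrix problem, and then read off the decay from the spectral gap of that transfer matrix. First I would insert identity operations $|\cup\rrangle$ on every slot except $t_0$ and $t_n$. By the scale consistency condition, Eq.~\eqref{eq:isometry}, every tensor outside the joint causal cone collapses, so both $\braket{A_{t_n}^\prime A_{t_0}}_{\tree_N}$ and the single-time averages reduce to contractions supported only on the two legs (of height $n$, since $\Delta t = 2^n-1$ forces the two cones to merge exactly $n$ scales up) together with the shared trunk of height $N-n$ running to the root. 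Concretely, $A_{t_0}$ and $A_{t_n}^\prime$ each ascend their leg under the transfer matrix $\mc{W}_L$ or $\mc{W}_R$ of Eq.~\eqref{eq:Tdefa}, are fused by the lowest-common-ancestor tensor into a single coarse operator, which then ascends the trunk to the root state $|\rho\rrangle|\id\rrangle$.

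Next I would introduce the spectral decomposition of the ascending transfer matrix $T$ (i.e. $\mc{W}_L$, or its $\mc{W}_R$ counterpart), $T = \sum_i \lambda_i |r_i\rrangle\llangle l_i|$. The scale consistency condition is precisely the statement $T|\cup\rrangle = |\cup\rrangle$, so $|\cup\rrangle = |r_1\rrangle$ is the right eigenvector with $\lambda_1 = 1$; trace preservation fixes the matching left eigenvector $\llangle l_1|$, also with eigenvalue $1$. The sufficient ``generic'' hypothesis is the spectral-gap assumption that $\lambda_1=1$ is the unique dominant eigenvalue, with subleading $0 < |\lambda_2| < 1$. Ascending an operator $n$ times then gives $T^n A = \llangle l_1|A\rrangle\,|\cup\rrangle + \delta_A$, where the residual $\delta_A = \sum_{i\ge 2}\lambda_i^n\llangle l_i|A\rrangle|r_i\rrangle = \mc{O}(|\lambda_2|^n)$ is orthogonal to the left fixed point, $\llangle l_1|\delta_A\rrangle = 0$.

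The crux is then the cancellation of the disconnected term. Taking $N-n\to\infty$ projects the trunk contraction $\llangle \mathrm{root}|T^{N-n}$ onto the dominant eigenspace $|\cup\rrangle\llangle l_1|$, so with expectation values normalized such that $\braket{\cup}_{\tree_N} = \llangle \mathrm{root}|\cup\rrangle = 1$ the two-time correlator becomes the bilinear fusion functional $\llangle l_1|\,\mc{C}(T^n A,\, T^n A')\rrangle$, where $\mc{C}(X,Y) := \mc{W}^{\mathrm{T}}(|X\rrangle|Y\rrangle)$ is the lowest-common-ancestor tensor and I used $\mc{C}(\cup,\cup)=\cup$. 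Expanding by bilinearity in $(\llangle l_1|A\rrangle\,\cup + \delta_A)$ and $(\llangle l_1|A'\rrangle\,\cup + \delta_{A'})$, the pure fixed-point term reproduces exactly $\braket{A_{t_n}^\prime}_{\tree_N}\braket{A_{t_0}}_{\tree_N}$, while each cross term vanishes because $\llangle l_1|\delta\rrangle = 0$. Hence the connected correlator equals $\llangle l_1|\mc{C}(\delta_A,\delta_{A'})\rrangle$, which at leading order is $\lambda_2^{2n}\,\llangle l_2|A\rrangle\llangle l_2|A'\rrangle\,\llangle l_1|\mc{C}(r_2,r_2)\rrangle$. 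Substituting $n = \log_2(\Delta t + 1)$ then converts geometric decay in scale into a power law in time, $|\lambda_2|^{2n}\sim \Delta t^{\,2\log_2|\lambda_2|} = \Delta t^{-\alpha}$ with $\alpha = -2\log_2|\lambda_2| > 0$; equivalently, correlations decay exponentially in the tree-graph distance $2n$ between the slots, which is logarithmic in $\Delta t$.

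I expect the main obstacle to be establishing the spectral-gap and genericity hypotheses rather than the algebra above: one must argue that $\lambda_1=1$ is non-degenerate and strictly dominant (a Perron--Frobenius-type statement for the non-Hermitian, merely positivity-preserving map $T$), and that the overlaps $\llangle l_2|A\rrangle$, $\llangle l_2|A'\rrangle$ and the fusion coefficient $\llangle l_1|\mc{C}(r_2,r_2)\rrangle$ are nonzero, so the $\lambda_2^{2n}$ term does not accidentally cancel (if it does, $\alpha$ is fixed by the next non-vanishing eigenvalue pair, and a complex $\lambda_2$ merely adds oscillation under the same $|\lambda_2|^{2n}$ envelope). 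Subsidiary care is needed to treat the left/right-child asymmetry of $\mc{W}_L$ versus $\mc{W}_R$ along the two legs --- handled by noting both share the fixed points $|\cup\rrangle$ and $\llangle l_1|$ --- and to ensure $N-n$ grows fast enough (e.g. $N \gg 3n$) that the trunk-projection error $\mc{O}(|\lambda_2|^{N-n})$ stays subdominant to the signal $\mc{O}(|\lambda_2|^{2n})$ throughout the joint limit.
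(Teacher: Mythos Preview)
Your overall strategy --- collapse to the joint causal cone via scale consistency, reduce each leg to iteration of a single transfer map, and convert geometric decay in the scale index $n$ into a power law in $\Delta t = 2^n - 1$ --- is precisely the paper's approach.

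Two points. First, your assertion that ``trace preservation fixes the matching left eigenvector'' and that $\mc{W}_L,\mc{W}_R$ share a common left fixed point $\llangle l_1|$ is unjustified: the paper states explicitly that these maps are CP but \emph{not} in general trace-preserving or unital, so their left fixed points need not coincide, and your argument that the cross term $\llangle l_1|\mc{C}(\cup,\delta_{A'})\rrangle$ vanishes does not go through as written. The conclusion survives regardless. Writing $F(X,Y)$ for the bilinear form given by fusion at the common ancestor followed by the trunk contraction (so $F(\cup,\cup)=1$), one has $\braket{A'A}=F(X_A,X_{A'})$, $\braket{A}=F(X_A,\cup)$, $\braket{A'}=F(\cup,X_{A'})$ with $X_A=c_A|\cup\rrangle+\delta_A$ the ascended operator on one leg; expanding, the $\mc{O}(|\lambda_2|^{n})$ cross terms appear identically in $\braket{A'A}$ and in $\braket{A}\braket{A'}$ and cancel upon subtraction, leaving $F(\delta_A,\delta_{A'})-F(\delta_A,\cup)F(\cup,\delta_{A'})=\mc{O}(|\lambda_2^L\lambda_2^R|^{n})$ with no assumption on left fixed points. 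This purely algebraic cancellation is effectively what the paper (also somewhat implicitly) relies on.

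Second, the obstacle you flag --- bounding the spectrum and establishing a gap for a non-Hermitian transfer map --- is exactly the paper's main technical step: they exhibit an explicit Kraus decomposition of $\mc{W}_R^{\mathrm T}$ from the unitary structure of the $\mc{W}$-brick, so the map is completely positive, whence Russo--Dye confines all eigenvalues to the closed unit disk; together with the scale-consistency eigenvector $|\cup\rrangle$ this supplies the Perron--Frobenius structure. Non-degeneracy of the unit eigenvalue is left as a numerically supported genericity assumption, just as you anticipate.
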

{We note that the spectral conditions on $\mc{W}$ in the above theorem are highly nonrestrictive; numerically always being satisfied (and so we expect these to hold almost surely).} We limit to the case where $\Delta t = 2^n-1$ because it leads to a symmetric, repeated structure amendable to an analytically tractable proof without averaging. However, as exhibited in Fig.~\ref{fig:numerics} (ii), for randomly sampled $\mc{W}-$bricks one expects polynomial decay of correlations on average for all $\Delta t$. A process tree may of course also have an inhomogeneous structure, while still being $\mc{W}-$type (i.e. satisfying Eq.~\eqref{eq:isometry}). We expect such a process tensor to also exhibit polynomially decaying correlations, and this more general setting will be relevant later when studying the spin-boson model in the context of a process tree ansatz; see Sec.~\ref{sec:grug}.

Here, we summarize the main argument underlying the proof. Applying the scale consistency condition, Eq.~\eqref{eq:isometry}, the unconnected correlator simplifies to:
\begin{align}
        \braket{A_{t_n}^\prime A_{t_0}}&= \!\includegraphics[scale=1.2, valign=c]{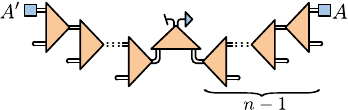}, \nn  \\
&=\llangle A^\prime |(\mc{W}_L)^{n-1} \mc{W}^{(\mathrm{b})} (\mc{W}_R^{\mathrm{T}})^{n-1} |A\rrangle ,\label{eq:corr_function}
\end{align}
where we choose $t=0$ and $t^\prime = 2^n-1$, in order to get a repeating structure. We see that two matrices $\mathcal{W}_{R}$ and $\mathcal{W}_L$ appear repeatedly, $n-1$ times, on the right and left branches of Eq.~\eqref{eq:corr_function} respectively. These matrices are what we call `right' and `left' contractions from Eq.~\eqref{eq:Tdefa}, and are two of the fundamental steps in computing any correlation functions in a process tree (numerically or otherwise); see also App.~\ref{ap:numerics}. Here, $\mathcal{W}_{R/L}$ are CP maps, but need not be unitary, unital, or TP in general. However, due to the scale consistency condition Eq.~\eqref{eq:isometry}, $\mathcal{W}_{R/L}$ has the Identity supermap $| \cup \rrangle$ as an eigenvector with the largest eigenvalue equal to one. Therefore, the spectral radius of $\mathcal{W}_R$ is equal to one; that is, all eigenvalues of $\mathcal{W}_R$ lie on the unit disk in the complex plane. Numerical evidence convincingly implies that this largest eigenvalue is unique for generic $\mc{W}$. From this, one can show that for large $n$,
\begin{equation}\label{eq:temppp}
    \braket{A_{n} A'{}_{n'}}_{\tree_N} \sim \braket{A_{n}}_{\tree_N}\braket{A'{}_{n'}}_{\tree_N} + O(|\lambda^{L}_2 \lambda^{R}_2|^{n})
\end{equation}
where $\lambda^{L}_2, \lambda^{R}_2 < 1$ are the second largest eigenvalue of $\mathcal{W}_{L}, \mathcal{W}_R$, respectively. To prove this we need to use modified Quantum Frobenius Perron Theorem~\cite{Bhatia2007,Wolf_undated-dn}. Then since $\Delta t = 2^n -1$, we re-write Eq.~\eqref{eq:temppp} as
\begin{equation}
    \braket{A_{n} A'{}_{n'}} \sim \braket{A_{n}}\braket{A'{}_{n'}} + O(\Delta t^{\mbox{\tiny log}(\lambda_2)}),
\end{equation}
where $\lambda_2 = |\lambda^{L}_2 \lambda^{R}_2| < 1$.
Therefore, the connected correlator, Eq.~\eqref{eq:poly_exact}, decays as $\Delta n^{-\alpha}$, with $\alpha = |\mbox{log}(\lambda_2)|$.

One can make similar arguments to examine the asymptotic behavior of higher point correlation  functions. Generally such correlations will decay polynomially with time between each nearest intervention. That is, the above result can be generalized to show that $\braket{A_{t_k }  \dots A_{t_2 } A_{t_1 }  A_{t_0}}_{\tree} \sim  (\Delta t_k)^{-\alpha_k} \dots (\Delta t_2)^{-\alpha_2}  (\Delta t_1)^{-\alpha_1} $, where $t_k < \dots < t_1 < t_0$ and $\Delta t_i = t_{i+i} - t_i$. Computationally, for $\mc{W}-$class process trees, $k-$time correlations reduce to a concatenation of $O(k N)$ right/left contraction moves, as described in the computation of one-time correlations in Sec.~\ref{sec:one-time}, and of $k$ `fusion moves' as in the computation of two-time correlations above. We detail this in more detail in App.~\ref{ap:numerics}.

While in this section, and throughout this work, we have examined temporal scaling of observables measured on a single time scale $s=N$, the process tree allows for \emph{cross-scale} interventions. What this means is that we could find the correlation between, for example, an instrument at the $t^{\text{th}}_s$ time slot at a coarse scale $s$, together with an instrument at some finer scale $s^\prime$ at slot $t^{\prime}_{s^\prime}$. Physically, one could also in principle measure a quantum system at some coarse time scale, and apply unitary control mechanisms at a finer scale. If one has a process tree representation of a process, and the different `scales' $s$ in fact correspond to physical temporal scales, then by construction such operations are easily accessible. This is because we have built this model out of locally causality preserving bricks, rather than imposing causality preservation globally. More technically, process trees are constructed from local one-to-two time superprocesses, rather than the much more general situation of a one-to-many time global superprocess. This feature of process trees could be relevant to a range of physical applications where vastly different time scales appear naturally, such as in quantum computing setups where the time scales for single-qubit operations, two-qubit operations, and measurements are vastly different. Such a device with many qubits may possess a complex (power law) noise profile. It would be interesting to explore this further in future work.


\section{Nature of Long Range Correlations}
\label{sec:NM-cor}
We have seen that the process tree generically exhibits polynomial decaying multitime correlations, suggestive of its utility in describing complex physical dynamics. Before analyzing the process tree structure of the spin-boson model, we will first investigate the nature of these strong temporal correlations. In particular, whether these correlations stem from memory effects, and if they can be genuinely quantum, i.e. involve entanglement in time. 

\subsection{Non-Markovianity}\label{sec:nm}


In this section, we analyze non-Markovianity in process trees.
{Recall the operational measure of non-Markovianity as described in Sec.~\ref{sec:PT}: quantum mutual information $\eta(\ups_N;a,b)$ between local channels in a process tensor (Eq.~\eqref{eq:NM1}).} This quantifies how much information in transferred through the environment, in terms of how well an optimal Markovian process models all accessible measurement statistics of a process~\cite{processtensor2}. We numerically plot $\eta(\ups_N;a,b)$ in Fig.~\ref{fig:numerics}(iii), for a uniform $\mc{W}-$class process tree. We can see essentially complete agreement between this and two-point correlation behavior in Fig.\ref{fig:numerics} (ii). This means that the characteristic behavior for multitime correlations that we saw in the last section stems primarily from non-Markovianity. This is by construction, which we now explain.

\begin{figure*}[t]
    \centering
    \includegraphics[width=\linewidth]{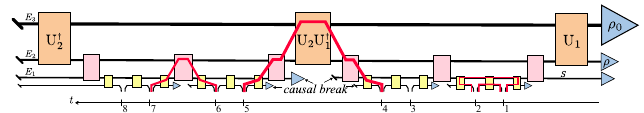}
    \captionsetup{justification=justified,singlelinecheck=false} 
    \caption{The circuit representation of a $3-$level process tree $\tree_3$, expanding out the explicit expressions for the constituent $\mc{W}-$bricks Eq.~\eqref{eq:w_brick}. The environment wires $E_s$ are organized in time scales $s$ with higher time scales are depicted with thicker lines, and the corresponding tensor colored according to scale. The thin wires at the bottom of the tensor network represent the system $S$ on which interventions may be applied. Shown also in red is the possible flow of correlations between three different pairs of neighboring intervention slots, $t=\{1,2 \},\,\{4,5 \},\, \& \,\{6,7 \}$. By construction, a process tree includes causal breaks at different time scales, forcing correlations to route through higher time scales. Note that correlations only flow only from past to future.   }
    \label{fig:circuit}
\end{figure*}

To explain why non-Markovian effects dominate in the process tree, it is helpful to consider how one would compute $\eta(\ups_N;a,b)$ in practice. Through quantum process tomography~\cite{white_non-markovian_2022}, one can reconstruct a process tensor Choi state through a complete basis of measurement and preparations. From this Choi state, $\eta(\ups_N;a,b)$ is directly computable. From Eq.~\eqref{eq:NM1}, for the process tree tomographic experiments will have the form of the multitime correlation
\begin{equation}
    \begin{split}
            \llangle \tree_{b,a} | A \otimes A^\prime \rrangle &=\includegraphics[scale=2, valign=c]{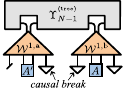}, \\
    &=\includegraphics[scale=2, valign=c]{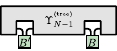},
    \end{split}
\end{equation}
where the gray comb represents the rest of a process tree, as in Fig.~\ref{fig:generaltree}, and $A, A^\prime$ are some tomographically selected instruments of the channels (usually a prepare and measure pair). The other projections (measurements/preparations) which are represented as uncolored triangles are unimportant, and can be chosen freely as long as they are fixed across experiments and are uncorrelated. The important thing here is that at the finest scale $s=0$, the quantum mutual information includes a causal break. This prevents any (quantum) information flow through the system wire, such that any non-zero mutual information must be due to correlations mediated solely through the environment (all other wires). We can see in the second line that the two-time instrument $A$ ($A^\prime$) coarse grains to an effective \textit{one-time} instrument $B$ ($B
^\prime$). $B$ will generally not itself involve any causal break. The remaining process tree $\tree_{N-1}$, with one less layer, will exhibit polynomially decaying correlations from the tree-like geometry of $\tree_{N-1}$. The self similarity of the tree implies that one sees long-range correlations at all scales, and therefore any causal break in a non-Markovianity measure will not interrupt this. 


In order to understand the non-Markovian structure of process trees, in Fig.~\ref{fig:circuit} we re-express a depth$-$3 process tree in terms of its component maps. As detailed in Sec.~\ref{sec:treeconstruct} and Fig.~\ref{fig:generaltree}, this consists of fine-graining a one time measurement process by iteratively applying the superprocess $\mathcal{W}$. Graphically, acting with $\mc{W}$ introduces a new system wire, with the old system wire from the previous (coarse) layer now mediating memory effects. In this way, we can see that fine-graining iteratively produces a non-Markovian process with memory distributed across all time scales. In particular, each additional layer introduces an additional single line of memory, and so memory dimension increases \textit{exponentially} with the number of time scales. Here, we have labeled the wires that carry environmental degrees of freedom as $E_s$. Temporal correlations in the process are then mediated through the various wires, and examples of memory traversing different layers are shown in colored lines.

Consider now the temporal correlations between neighboring intervention slots $t$ and $t+1$. Each of these next-neighbor sites are included in the averaging of the results from Fig.~\ref{fig:numerics}. Fig.~\ref{fig:circuit} illustrates that for different intervention times $t$, correlations between neighboring slots could be carried by wires at different time scales. Shown in the figure are three different pairs of neighboring slots, corresponding $t=\{1,2 \},\,\{4,5 \},\, \& \,\{6,7 \}$, where correlations are mediated only through scales $s=0 \, \& \, 1$, $s=3$, and $s=2$ respectively (correlation `paths' are shown in red). Note that the mediation of correlations is also limited by causality -- correlations cannot flow from the future back into the past. The latter two paths, mediate strictly non-Markovian correlations. This is because, by construction the process tree contains a hierarchy of causal breaks that force correlations to be routed through wires at longer time scales. It is these built-in causal breaks that organize the total memory in the process into different time-scales. On the other hand, the interventions at times corresponding to $t=\{1,2 \}$ are correlated through environmental and system paths, implying that the correlations between these two slots will generally be a mix of Markovian and non-Markovian. 

{We note that the above analysis of correlation (Fig.~\ref{fig:circuit}) is for a single copy of a process tree, treating the individual wires (Hilbert spaces) as being physical environmental degrees of freedom. Later, when we find the tree representation of physical processes from variational model fitting, the analysis of this section is difficult to apply. In this case, the meaning of the different `scale' levels $s$ is an open question. On the other hand, the process tree which would result from the method in Section~\ref{sec:plaquette} retains the circuit structure. Depending on the model, in this case each scale corresponds to different environmental modes or systems, the the details of Fig.~\ref{fig:circuit} inform us of the correlation structure therein.}

\subsection{Entanglement in Time}\label{sec:entanglement}
We have seen that in generic process trees correlations and the degree of memory generically decay polynomially. It therefore seems to be a good model for long-range memory dynamics. But are these correlations classical or quantum? The measures of correlations that we have looked at indiscriminately quantify correlations in the Choi state of the process. Since the Choi state is generally a mixed state, the corresponding correlations could be entirely classical. That is, the memory in the process tree could be classical, such that all multitime correlations could be producible from a classical register~\cite{Milz2021GME,taranto2023characterising}, or could even be modeled by an entirely classical stochastic process~\cite{strasberg2019,Milz2020prx}. 

We would therefore like to test whether it is possible to have entanglement in time in a process tree. From an operational viewpoint, entanglement in time quantifies the extent to which the multi-time correlations in the process require genuinely quantum memory and cannot be reproduced by a classical memory~\cite{taranto2023characterising}. Mathematically, this is equivalent to measuring bipartite entanglement in the Choi state of the process tensor, that is, entanglement across two intervention slots. Note that `entanglement in time' should not be confused with the (unfortunately named) quantity of `temporal entanglement' in influence matrices~\cite{Lerose2021,Lerose2021prb,Bertini2023}, an object related to the process tensor. Despite its name, `temporal entanglement' is a measure of total correlation (both classical and quantum), more akin to quantum mutual information $\eta$ (Eq.~\eqref{eq:NM1}) rather than entanglement in time.

While there exists a range of inequivalent monotones for entanglement in mixed quantum states, we here look at the entanglement negativity of the Choi state of the process tree~\cite{Wilde_2011},
\begin{equation}
    \mc{E}(\tree_{t t^\prime}) := \frac{ \| (\tree_{t t^\prime})^{{\mathrm{T}_t}}\|_1-1}{2}, \label{eq:negativity}
\end{equation}
where $\tree_{t t^\prime}$ is the reduced state of intervention slots $t$ and $t'$ obtained by inserting the Identity instrument ($\cup$) in all remaining slots of the process, 
${\mathrm{T}_t}$ denotes the partial transpose with respect to the slot $t$, and $\|.\|_1$ denotes the trace norm. We first numerically computed the negativity for the uniform process tree, as in Sec.~\ref{sec:multitime}, for Haar random sampling of $U_2$ and $U_1$, averaging over all possible $t,t^\prime$ for each value of $\Delta t$. This was for the reduced Choi state of the process tree for two times, with identity instruments everywhere else, as in the graphic of Fig.~\ref{fig:numerics} (iii). Note that a $\ket{0}$ state preparation is contracted with the very final wire, as by temporal causality constraints (Eq.~\eqref{eq:causality}) this wire cannot be correlated with the others. Remarkably, we found that the negativity tended to decay to zero after at most $\Delta t =10$ time steps, for almost every trial of a random tree. Therefore, having non-zero entanglement in time across long times is apparently not a typical property of a homogeneous, $\mc{W}-$class process tree.

To see whether a process tree can possibly exhibit strong entanglement in time, we next computed the negativity for unitaries parameterized by a Hamiltonian, $U_a = \exp[-i H_{\beta,a}]$, where for $a \in \{1,2 \}$, 
\begin{equation}\label{eq:heisenberg}
    H_{\beta,a} = (1-\beta) \frac{a \pi}{6}  H_{\mathrm{SWAP}} + \beta H_{\mathrm{GUE}}.
\end{equation}
Here, $H_{\mathrm{SWAP}}:= \sigma_x \otimes \sigma_x  + \sigma_y \otimes \sigma_y + \sigma_z \otimes \sigma_z$ generates the SWAP unitary, with $\sigma_a$ the Pauli$-a$ matrix, and $H_{\mathrm{GUE}}$ is sampled from the (chaotic) Gaussian unitary matrix ensemble. Therefore for $\beta=0 $ we have $U_2 = \mathrm{SWAP}^{2/3}$ and $U_1 = \mathrm{SWAP}^{1/3}$, whereas for $\beta=1 $ we recover the random unitary case. We tested between these two extreme cases and found that entanglement in time drops off quickly with $\beta$; heuristically with the amount of randomness in the unitary. On the other hand, for small values of $\beta$, we found that negativity decays polynomially with $\Delta t =  t^\prime - t$. The results are plotted in Fig.~\ref{fig:numerics} (iv). Note that for any $\beta$, the {resultant process tree exhibits} power law decaying temporal correlations (and memory), as in Fig.~\ref{fig:numerics} (ii)-(iii). These results suggest that the scrambling character of $\mc{W}$ (specifically of the unitaries $U_2$ and $U_1$) determines whether the polynomial decaying correlations in a process tree are quantum or classical. 

These numerical results suggest that while process trees generically exhibit power-law correlations, it is not typical that this is genuinely quantum in nature. This is likely due to the cumulative addition of new environment wires in the process tree, with growing time difference $\Delta t$. Each of these wires need to be traced out, by assumption of them corresponding to (inaccessible environmental) degrees of freedom. Looking at Fig.~\ref{fig:circuit}, we see that even for a height $N=3$ process tree, corresponding to $\Delta t \leq 7$, there are seven trace operations. Then if the unitaries $U_1 $ and $U_2$ are sufficiently scrambling, every trace operation potentially introduces classical noise and correlation into the process. However, we have shown in Fig.~\ref{fig:numerics} (iv) that there is the capacity for the process tree to exhibit genuinely quantum temporal correlation across long times. This can be interpreted as a kind of `quantum $1/f$ noise': power-law correlation that cannot be produced from a classical model. 

\section{Process Tree Representations of Spin-Boson Models} \label{sec:grug}
We have so far constructed and analyzed a class of models for capturing the essence of processes with long-range memory. However, the generality of the model has not yet been established, and moreover the connection to physically {reasonable} open dynamics is not clear. Although we have shown that process trees \emph{can} capture nominally interesting physics with these properties, is this actually reflected in reality?
We now turn to this problem by considering a prototypical example of non-Markovian open quantum dynamics in the spin-boson model, and how well it can be described by process trees. 

This particular model has been well-studied in the literature and has many appealing properties that make it a useful class of dynamics to study -- both in the analysis of its physics and in the benchmarking of numerical methods~\cite{Strathearn2018, jorgensen_discrete_2020}. In particular, the order parameter governing system-bath coupling initiates a BKT-class quantum phase transition~\cite{le2008entanglement,LeHurrBook2010}. 
Moreover, as we shall show, it exhibits polynomially decaying correlations in its non-Markovian memory, making this a compelling candidate for the process tree model.

\subsection{Spin-Boson Model}
Quantum impurity systems are a widely studied class of physical problems, with relevant contexts in biology, condensed matter physics, and noise in quantum information processors.
Various subclasses of these dissipative models exist, including that of the spin-boson model, where a two-level system couples to an environment of bosonic modes. 
The corresponding Hamiltonian for our particular setup takes the form
\begin{equation}
\label{eq:SB-ham}
{  H = \Omega \sigma_x + \sum_i \sigma_z(g_ia_i + g_i^\ast a_i^\dagger) + \omega_i a_i^\dagger a_i =: H_S + H_E,}
\end{equation}
where $\Omega$ is the tunneling amplitude between $Z$-eigenstates, $\sigma_x,\sigma_z$ are Pauli spin operators, $g_i$ coupling coefficients, and $a_i^\dagger,a_i$ bosonic creation and annihilation operators of an environment mode with energy $\omega_i$. 
The internal bath correlations are governed by the spectral density of bath frequencies.
A well known spin-boson variant -- the \emph{Ohmic} model -- is the case where the bath has a linear spectral density $J(\omega)\sim \alpha \omega$. The dimensionless parameter $\alpha$ determines the coupling between the system and the environment (or the strength of dissipation). More specifically, the spectral density we work with has an exponential cutoff at some frequency $\omega_c$. That is, $J(\omega) = 2\alpha\omega\exp(-\omega/\omega_c)$.
Here, the BKT phase transition occurs at a critical value of $\alpha$, denoted by $\alpha_c$. The location of this phase transition has been well-studied in the literature; its exact value depends on the chosen cutoff frequency, $\alpha_c = 1 + \mathcal{O}(\Omega/\omega_c)$.
Below $\alpha_c \approx 1$, the system takes a localized phase before discontinuously jumping to a delocalized phase at $\alpha=\alpha_c$~\cite{Bulla_2003}. For a complete discussion of these properties, see Refs.~\cite{le2008entanglement,LeHurrBook2010}.

For the Hamiltonian in Eq.~\eqref{eq:SB-ham}, a process tensor may be constructed representing all of the multi-time correlations of the spin-boson model by, at each step, evolving $H$ for some time $\delta t$ where one half of a fresh Bell pair plays the role of the system at each time. This is the standard formulation of the generalized Choi state~\cite{processtensor}, as detailed in Sec.~\ref{sec:PT}.
To accomplish this, we use the \texttt{OQuPy} software package~\cite{OQuPy} which employs the PT-TEMPO algorithm~\cite{Strathearn2018, fux_efficient_2021} to solve the MPO representation of the $k$-step process tensor for the corresponding spin-boson Hamiltonian given in Eq.~\eqref{eq:SB-ham}.\textsuperscript{\footnote{Alternatively, for larger simulations it may be more efficient to implement a more recent algorithm, such as the `Divide and Conquer' method from Ref.~\cite{cygorek_sublinear_2023}.}} Our parameter choices are a frequency cutoff of {$\omega_c = 10 \text{ ps}^{-1}$} and a step spacing of {$\delta t = 0.1\text{ ps}$}. {Note that since the system Hamiltonian does not factor into the PT-TEMPO computation, we are probing only the influence of the bath. This means that when computing the non-Markovianity, we will be operating in the $\Omega=0$ regime, where the phase transition occurs at $\alpha_c=1$. However, when we shall perform our fits, the process tensors with and without $H_S$ are related up to Trotter error by local rotations. Thus, up to Trotter error, the absence of $H_S$ will not affect the quality of the fits.} We designate this exact (up to numerical truncation) process tensor as $\Upsilon_{k}^{(\text{SB})}$. 
This is the class of process to which we fit our variational process tree models, which we will now detail.

\subsection{Fitting Methods} 
Let $\tilde{\Upsilon}_{N}^{\text{(tree)}}$ be our model for the process using the tree ansatz, with tilde $\tilde{\cdot}$ representing variational models for the extent of this section. In particular, for $k=2^N$ steps, this object encodes $\sum_{i=0}^{N-1} 2^i$ $\mathcal{W}$-bricks, which each implicitly depend on two two-body unitaries. Let us denote this by
\begin{align*}
    \tilde{\Upsilon}_{N}^{\text{(tree)}}&\equiv\tilde{\Upsilon}_{N}^{\text{(tree)}}[\vec{\mathcal{W}}_1,\vec{\mathcal{W}}_2,\cdots,\vec{\mathcal{W}}_N],\\
    \text{where}\quad &\vec{\mathcal{W}}_1 = \{\mathcal{W}_1^{(1)}\},\\
    \vec{\mathcal{W}}_2 &= \{\mathcal{W}_2^{(1)}, \mathcal{W}_2^{(2)}\},\\
    &\vdots \\
    \vec{\mathcal{W}}_N &= \{\mathcal{W}_N^{(1)}, \mathcal{W}_N^{(2)},\dots,\mathcal{W}_N^{(2^{N})}\}.
\end{align*}
Each $\mathcal{W}_s^{(j)}$ corresponds to the $j$th $\mathcal{W}$-block at the $s^{\text{th}}$ level of the tree, as in Sec.~\ref{sec:treeconstruct}. Moreover, each $\mathcal{W}_s^{(j)}$ itself depends on two unitary operations $U_1,U_2\in SU(d_Sd_E)$, parametrized by some $\vec{\theta}$ and $\vec{\phi}$, respectively. Note that the space $E$ of scale $s$ is equal to the $S$ space of the coarser scale $s-1$ implying that $d_S=d_E:=D$. We can hence readily count the number of free parameters in this variational ansatz $\tilde{\Upsilon}_{N}^{\text{(tree)}}(\{\vec{\theta},\vec{\phi}\})$ as $2^{N+1}(d_S\cdot d_E)^2 + \sum_{i=1}^{N-1} 2^{i+1} D^4$.

Suppose now that we have access to a representation of a multi-time process given by process tensor $\Upsilon_{k}^{(\text{target})}$. This process tensor may be in any form (quantum or classical), so long as we have access to inner products with it. 
As an objective function, we take the 2-fidelity between the variational process tree, and the true spin-boson process tensor as target:
\begin{align}
\label{eq:SB-objective}
    &\mathcal{F}_2(\tilde{\Upsilon}_{N}^{\text{(tree)}},\Upsilon_{k}^{(\text{target})})  \\ \nonumber \mbox{with} \quad &\mathcal{F}_2(\rho,\sigma) := \frac{\tr(\rho\sigma)}{\max[\tr(\rho^2),\tr(\sigma^2)]}.
\end{align}
This measure is not exactly the Uhlmann fidelity, but has the desirable properties that $\mathcal{F}_2(\rho,\sigma)\in [0,1]$, and $\mathcal{F}_2(\rho,\sigma) = 1 \iff \rho=\sigma$. Moreover, it is symmetric, unitarily invariant, and reduces to the Uhlmann fidelity in the case where either $\rho$ or $\sigma$ is pure~\cite{Liang_2019}. These features make this an ideal objective function for fitting mixed quantum states. Importantly, the 2-norm is far easier to deal with in the case of tensor networks than the 1-norm, rendering the subsequent computations feasible. We remark that evaluation of Eq.~\eqref{eq:SB-objective} for generic $\Upsilon_{k}^{(\text{target})}$ is inefficient. In practice, however, we find that we can compute it for single-qubit processes up to 128 steps on a personal laptop. 

Although the process tree construction clearly embeds multi-time physics, it is only efficient in time steps $k$ for computing $\ell$-body expectation values (with $\ell$ fixed), where every other step is projected onto the identity supermap. In order to fit a process tree with an \emph{efficiently} evaluated objective function, then, one could always fit the tree to $\ell$-body marginals of the target process. This may be pertinent (for example, with the spin-boson model) in instances where important physics is captured by low-weight correlations in the dynamics. Here, we are interested in examining the overall expressiveness of the ansatz and hence do not employ this approach.

Now that we have an objective function and a parametrized tensor network, it is straightforward to fit a process tree to given dynamics. We use the L-BFGS-B optimizer~\cite{lbfgs-alg} to maximize Eq.~\eqref{eq:SB-objective} with a \texttt{JAX} backend~\cite{jax2018github} to compute the gradients via automatic differentiation, as well as the Python library \texttt{quimb}~\cite{quimb} to handle tensor network semantics. Rather than embedding the unitarity condition into $U_1$ and $U_2$, the operators are permitted to be general and then unitized at each step. The optimization procedure is highly non-linear. Hence, each process fit is run many times from different random seeds until appropriately converged.
Typically, however, we find that this optimal value is quickly found. Only occasionally does the optimizer get stuck in a local minima, but with the addition of stochastic methods (such as a change from L-BFGS-B to Adam~\cite{kingma2014adam}) the methods are highly reliable. 

We finally remark that we have an extra freedom to explore in the parametrisation of $\Upsilon_{N}^{(\text{tree})}$ -- namely, the invariance of $\mathcal{W}$ across different time steps and time scales. In particular, we can further compress the description of the dynamics by enforcing a type of homogeneity of the tree, where $\mathcal{W}$-bricks in different locations made to be the same as one another. There are two important cases we consider in that respect, a time-homogeneous tree and a scale-time-homogeneous tree. The former supposes that dynamics are invariant on translation across time (same $\mathcal{W}$ block within a level but different between levels) and the latter is invariant across both time and scale (same $\mathcal{W}$ block in each position of the tree). That is, we set $\mc{W}^{(j)}_s \equiv \mc{W}_s$ in the former case, and $\mc{W}^{(j)}_s \equiv \mc{W}$ in the latter.
We denote these respectively by $\Upsilon_{N}^{(\text{th tree})}$ and $\Upsilon_{N}^{(\text{sth tree})}$, such that
\begin{equation}
    \begin{split}
        \Upsilon_{N}^{(\text{th tree})} &\equiv \Upsilon_{N}^{(\text{th tree})}[\mathcal{W}_1,\mathcal{W}_2,\cdots,\mathcal{W}_n]\\
        \Upsilon_{N}^{(\text{sth tree})} &\equiv \Upsilon_{N}^{(\text{sth tree})}[\mathcal{W}].
    \end{split}
\end{equation}
This constitutes all the required machinery to fit process tree models to target (here, spin-boson) multi-time processes.

\subsection{Results}
There are two properties of the tree we wish to examine. First, we are interested in determining whether the $\mathcal{W}$-brick construction is expressive enough to describe real physical systems. Next, we wish to explore the extent to which homogeneity of the tree can play a role both in compressing the model and in generalizability: constructing larger trees from smaller constituents to describe future time dynamics. We will first demonstrate the long-range nature of the spin-boson model through an analysis of its non-Markovian memory; then investigate the performance of process tree fits for a range of coupling strengths; and finally determine how smaller process tree fits generalize to larger ones.

\begin{figure}[t!]
    \centering
    \includegraphics[width=0.95\linewidth]{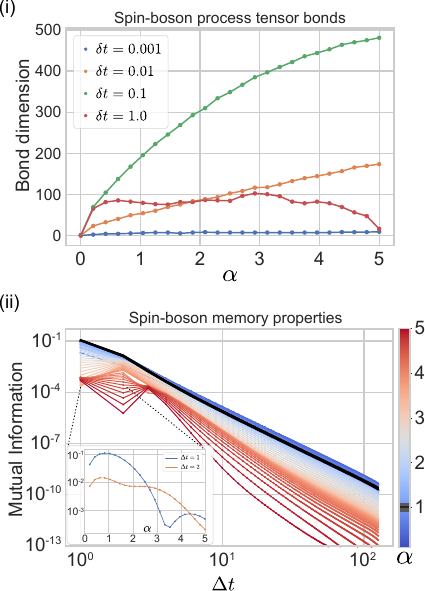}
    \caption{{Properties of the PT-TEMPOs for the spin-boson model within which we work. (i) Comparison of the maximum bond dimension of each MPO as a function of coupling strength $\alpha$, across different choices of time step $\delta t$. We see that when this is chosen too small, there is little interesting memory structure; too large and the system starts to thermalize, also removing memory. The time step $\delta t =0.1$ is therefore chosen for all further plots. (ii) Memory structure of the spin-boson model. We determine the scaling of non-Markovianity as a function of temporal distance $\Delta t$ for different coupling strengths $\alpha$ (with $\delta t =0.1$). From the linear plot, it is clearly seen that this process has polynomially decaying correlations. Inset: the short-time correlations as a function of $\alpha$; these are maximized close to the phase transition.}}
    \label{fig:sb-properties}
\end{figure}

\emph{Spin-boson memory structure.--} We start by providing some motivation for this analysis. The spin-boson model is well-known to feature long-range interactions in time, usually explored via its mapping to an Ising spin chain with long-range interactions~\cite{le2008entanglement}. We show here first how this corresponds to long-range temporal correlations as we have motivated it so far throughout this work. {First, it is instructive to look at the bond dimension scaling of the PT-MPOs for a 128 time-step spin-boson process tensor, given in Figure~\ref{fig:sb-numerics}(i). With a relative truncation threshold of $10^{-7}$, we can see how the choice of evolution times $\delta t$ influences the resulting maximum bond dimensions. This informs our choice of $\delta t = 0.1 \text{ ps}$: this timescale seems to capture the most interesting memory dynamics.}
Starting with the numerically computed spin-boson process tensor across $128$ time-steps $\Upsilon_{128}^{\text(SB)}$, we sweep $\eta(\Upsilon_{128}^{\text(SB)}; t, t+\Delta t)$ (Eq.~\eqref{eq:NM1}) to find the average quantum mutual information for different values of $\Delta t$. The results of this are shown in Fig.~\ref{fig:sb-properties}(ii) across a wide range of coupling strengths from 0.1 to 5 in steps of 0.1. From this, we can see several interesting features. First, it is clear that temporal correlations are decaying polynomially, corroborating the notion of a spin-boson model as possessing long-range memory. This is true for almost all values of $\alpha$.
However, it appears to be a coupling-dependent observation. Although correlations decay polynomially for most values of $\alpha$, this is not true for very low or very high values. At $\alpha=10^{-4}$ (not shown), for example, the non-Markovianity curves show exponential decay in time. Similarly, one can see that with increasing $\alpha$, the curves in Fig.~\ref{fig:sb-numerics}(ii) begin to lose their linear character on the log-log plot. Respectively, this could be understood as an extremely weak environment incapable of mediating this slow memory, and a thermalized environment where information effectively only dissipates outwards. {The non-monotonic behavior at very early times likely results from the effect of the causal break operations; $\rho_a$ and $\rho_b$ in Eq.~\eqref{eq:NM1}. We expect these independent measurement/preparations to affect correlations most at earlier times, before their influence has a chance to effectively dissipate through the environment. Further, these will be more pronounced for stronger $SE$ couplings. However, this is only speculative: it is a non-equilibrium system and may have other pathologies in the dynamics.}

\begin{figure*}[t]
    \centering
    \includegraphics[width=\linewidth]{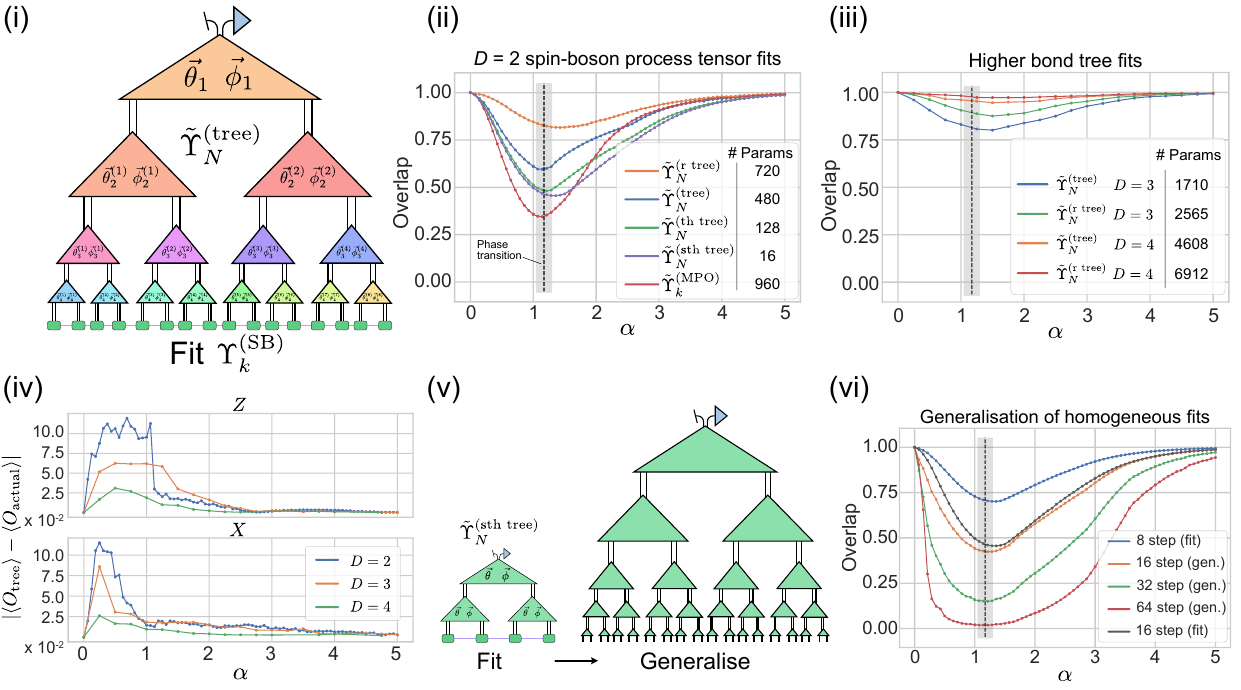}
    \caption{{Procedure and results fitting process tree ans\"atze to physical models. (i) Illustrative diagram of the fitting procedure. We parametrize each $\mathcal{W}$ brick of the process tree in terms of its unitary constituents. The tree is then optimized to maximize $\mathcal{F}_2$ with respect to some target process (in this case the spin-boson process tensor).  (ii) Results fitting different $D=2$ process ans\"atze to the 16-step spin-boson process tensor. We report the best overlap of various trees as well as a fixed bond dimension MPO with respect to the stated model across a range of coupling strengths $\alpha$. Not only does the tree well-capture the intended physics of the process, we clearly see the effects of the spin-boson BKT-transition as correlations increase. 
    (iii) Results fitting $D=3$ and $D=4$ process trees to the spin-boson process tensor, both with the strict $\mathcal{W}$-brick parametrization, and with the relaxed $\mathcal{W}$.
    (iv) Computing accuracy of observables of the spin from the $\mathcal{W}-$tree fits in (ii), averaged over the 16 time steps. We start the spin in $|0\rangle$, apply a $\sigma_x$ propagator, and then measure $\langle Z \rangle $. We also start the spin in a $|+\rangle$ state, let it idly evolve, and then determine $\langle Z\rangle$.
    (v) Illustrative diagram of generalizing scale-time-homogeneous trees to larger processes. After fitting the $\mathcal{W}$ brick of a smaller process, one can then construct a larger generalization out of that optimized brick. (vi) Results showing the generalization of smaller fits to larger ones. A scale-time-homogenous process is fit to an 8-step spin-boson process tensor and its optimized $\mathcal{W}$ brick used to construct 16, 32, and 64 step processes. }}
    \label{fig:sb-numerics}
\end{figure*}

It is not clear the extent to which the polynomial decay at other $\alpha$ are finite-sized effects. Away from the phase-transition, the curves start to deviate from their linear character for high values of $\Delta t$. For the purposes of this work, it suffices to observe the polynomial decay to demonstrate both the efficacy and the suitability of a tree ansatz, but in a future work it would be interesting to see whether this behavior holds at much longer times, and whether it is truly only at the phase transition where this correlation length is effectively infinite. Although Fig.~\ref{fig:sb-numerics}(ii) does not represent all of the properties of the process, it is interesting to note that the {short-range} non-Markovian memory is maximized at (and near) the phase transition.
The nature of non-Markovianity or multi-time correlations has not been studied in this context to our knowledge, but serves as a platform to further understand the interesting physics of this paradigmatic model. 

\emph{Process tree fits.--} Having established the memory scaling behavior of spin-boson models, we can investigate the extent to which simple process trees represent the physics. The setup to the fitting problem (as discussed in the previous subsection) for a generic tree is depicted in Fig.~\ref{fig:sb-numerics}(i). For a 16 step process, we generate $\Upsilon_{16}^{\text{(SB)}}$ for a range of values of $\alpha$. For the sweep of $\alpha$, we then find the best-fit for five different models: a $\mathcal{W}$-brick process tree $\tilde{\Upsilon}_4^{(\text{tree})}$, a time-homogeneous tree $\tilde{\Upsilon}_4^{(\text{th tree})}$, a scale-time-homogeneous tree $\tilde{\Upsilon}_4^{(\text{sth tree})}$, and an MPO $\tilde{\Upsilon}_{16}^{(\text{MPO})}$. {We also consider a process tree with a `relaxed' superprocess condition, which we denote by $\tilde{\Upsilon}_N^{(\text{r tree})}$. It is constructed out of $\mathcal{W}$-bricks with unitaries that are not constrained to multiply out to the identity. That is, $\Lambda_1 = U_1$, $\Lambda_2 = U_2$, and $\Lambda_3 = U_3$ as per Eq.~\eqref{eq:y_brick} -- they are not scale-consistent. We include this last tree to test the restrictiveness of the $\mathcal{W}$ brick in comparison to a proxy for $\mathcal{Y}$; recalling that $\mathcal{W}$ allows for efficient computation of two-point correlations, but is a restriction which is not necessarily the best choice.}
{We trial these various models for an environment size of $D=2$, 3, and 4 (which are respectively the square roots of their internal bond dimensions. The system size (bottom layer) is kept constant at $d_S=2$.}

The results of the various model fits can be seen in Figs.~\ref{fig:sb-numerics}(ii), (iii), and (vi). {Additionally, the tree fits for the $\mathcal{W}$-brick process tree $\tilde{\Upsilon}_4^{(\text{tree})}$ from (ii) and (iii) are used to compute the state of the spin as a function of time in two scenarios: in the first, the system is prepared in the $\ket{0}$ state, and a $\sigma_x$ propagator $\exp(-i\delta_t \sigma_x / 2)$ applied at each time, and $\langle Z\rangle$ determined. In the second, the system is prepared in the $\ket{+}$ state, left with $H_S=0$, and then $\langle X\rangle$ computed. These values are then each compared with the PT-TEMPO computation. Naturally, the error is larger in the delocalized phase before the spin effectively freezes. This provides an additional metric to the $\mc{F}_2$ overlap. Note however, that since the trees are not optimized for these two-point correlations, we would not necessarily expect this to be the best tree representations to compute these quantities.} 
We see several remarkable results in the remainder of plots. The first is to note that the generic process tree serves as a very good fit for the spin-boson model{, even for small internal bonds ($D=2$) but especially as we increase the bond slightly}. 
This demonstrates that our approach to efficiently characterizing long-range temporal correlations in not only of academic interest, it practically captures the physics of well-established non-Markovian open quantum systems.
Moreover, it is interesting to note that these fits witness the phase transition with $\alpha$. The minimum value of the $D=2$ curves coincide with the value of $\alpha_c = 1 + \mathcal{O}(\Omega/\omega_c)$ {(since the exact location of the phase transition is dependent on the system propagator, we have indicated some uncertainty here.)}

Although the reduction in fit quality coincides with an intuition of criticality at the phase transition, it is not entirely clear what singular property causes this breakdown. As observed in Fig.~\ref{fig:sb-properties}(ii), the short-range memory is maximized at the phase transition and remains polynomially decaying. However, the non-Markovianity described here is perhaps too coarse a measure to fully describe the fitting results. For instance, taking $\alpha=0.5$ gives nearly the same scaling of the memory, but the tree is a substantially better fit. It is likely that finer measures of the process complexity (such as multi-time correlations or spectral properties) are needed to sufficiently answer this question.
Nevertheless, we are not aware of this perspective on the spin-boson model phase transition elsewhere in the literature, and believe it warrants further study. {It would also be interesting to investigate the exact role played by the use of $\mathcal{W}$ superprocesses in this context. The relaxed tree fits clearly perform much better, but do not have substantially more free parameters. This may hint at a process wherein scale consistency does not immediately hold.}

The next feature worthy of attention is to compare between the fits of $\tilde{\Upsilon}_4^{(\text{sth tree})}$ and $\tilde{\Upsilon}_4^{(\text{th tree})}$. The two models are very closely matching for $\alpha < \alpha_c$ and only start to deviate once the system transitions from the delocalized to localized phases, despite an order of magnitude difference in the number of free parameters.
We might understand this as also representing a change in the memory structure to something less homogeneous in scale. This is suggestive that complex dynamical systems may exhibit a temporal change-of-scale invariance, in close analogy to critical many-body states and the tensor network representations thereof, such as MERA~\cite{CriticalMERA}.

Lastly, and perhaps most compelling to the case for the process tree in this instance,
there is a large gap between the optimized MPO model and the different process tree fits, particularly near the phase transition. 
In this small instance, a fixed bond dimension MPO is still expressive enough to {significantly} overlap with the spin-boson process tensor. 
However, despite possessing substantially more (2-60$\times$) free parameters than the different process tree models, the MPO is not appropriately structured to match the polynomial decay of the temporal correlations in the spin-boson model. We present this as evidence that not only are process trees better suited to represent complex dynamics, it is also {a more} efficient representation.

\emph{Generalizing scale-time-homogeneous models.--} Our last set of results constitutes an investigation into the utility of these models in predicting future dynamics. A generic tree fit might be highly tailored to that specific process and set of steps, rendering future predictions inaccessible. However, if the process exhibits homogeneous features, then it should be possible to assemble a larger future representation from a smaller process. This is depicted graphically in Fig.~\ref{fig:sb-numerics}(ii). In particular, here we fit a variational tree $\tilde{\Upsilon}_3^{\text{(sth tree)}}$ to an 8-step spin-boson process tensor. Recall that this process tree is fully characterized by a single $\mathcal{W}$ brick at all times and scales. It is therefore straightforward to construct a larger process $\tilde{\Upsilon}_N^{\text{(sth tree)}}$ from this optimized $\mathcal{W}$. 

Once more, we sweep a range of coupling values $\alpha$. The optimal $\mathcal{W}$ fitting to an 8-step spin-boson process tensor is then used in the best case to construct a 16, 32, and 64 step process, and their overlaps computed with the true spin-boson processes for these higher steps. Our results are shown in Fig.~\ref{fig:sb-numerics}(v), alongside the 16-step fit $\tilde{\Upsilon}_3^{\text{(sth tree)}}$ from Fig. ~\ref{fig:sb-numerics}(iv) as a point of comparison. 
Although the quality of the curves decays with the increased number of steps, there is still a surprising amount of overlap in the larger instances. Quantum states at this scale are extremely unlikely to have any coincidental overlap. Clearly, there is some essential physics that can be captured and generalized from small processes without explicitly solving for the Hamiltonian.
Although only preliminary, this points at the ability for the tree to not only be efficient in its representation, but to be predictive of future dynamics of a system.

\section{Process Trees From Microscopic Physics} \label{sec:plaquette}
An interesting remaining question is that of how to construct a process tree from the underlying dynamics. Although we have so far studied representability and the in-principle expressivity of the process tree ansatz, if one were interested in explicit simulation of a non-Markovian open quantum system, it is often desirable to take an additional step and start from the underlying dynamics. When supplied with a system-environment Hamiltonian, the prototypical method to obtain a process tensor is to employ the Feynmann-Vernon influence functional (IF)~\cite{Feynman-1963}, which encodes environment correlations in an effective trotterisation of the full dynamics. The resulting functional generally grows exponentially with time, but can be exactly represented by a 2D tensor network, whose contraction can typically be hard to compute. In this section, we pose a resolution to this problem as an application of the tensor renormalization group (TRG) scheme~\cite{Vidal2007,Levin_2007,PhysRevB.78.205116,Evenbly2015,Evenbly_2017}. 
This can be used to iteratively construct a process tree from the IF. 
We leave, however, the problem of explicit numerical analysis and benchmarking against the state-of-the-art numerical packages -- such as ACE~\cite{Cygorek2022,cygorek_sublinear_2023} or TEMPO~\cite{Strathearn2018,Pollock2019,fux_efficient_2021} -- to future work.

\begin{figure*}[t!]
    \centering
    \includegraphics[width=\linewidth]{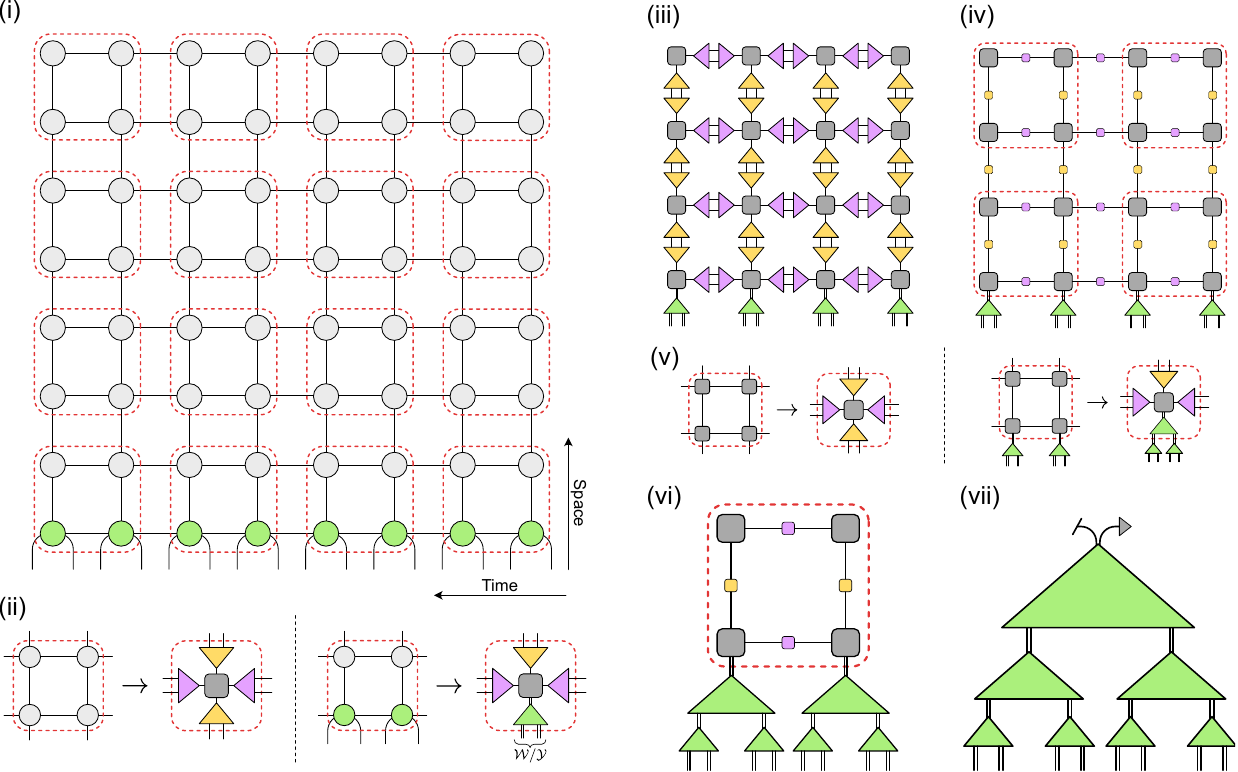}
    \caption{{Coarse-graining scheme to construct a process tree from an influence matrix. (i) System-environment dynamics are written exactly as a 2D tensor network using the influence matrix formalism. This tensor network is in general inefficient to contract. We propose a series of plaquette moves whereby groupings of the influence matrix are coarse-grained. At the bottom row (in green) is the `system' component of the influence matrix. (ii) Each group of four sites is replaced with a close tensor network that possesses a structure amenable to renormalization. The system components that later comprise the tree can be put into $\mathcal{W}$ or $\mathcal{Y}$ form. (iii) The IM after one round of plaquette moves reduces to (iv) wherein corresponding horizontal (purple) and vertical (yellow) triangles multiply out and reduce the internal dimension. These matrices may then be absorbed into the (grey) box tensors. (v) The next round of transformations repeats the plaquette moves on the newly renormalized lattice, resulting in the next stage at (vi), and the final process tree at (vii).}}
    \label{fig:influence-trg}
\end{figure*}

Suppose we have a spin system coupled to an external bath via the time-independent Hamiltonian $H=H_0 + H_E$. The time evolution operator $\mathcal{U}_{\delta t}$ for this open system can be approximated to first order as $\mathcal{U}_{\delta t} = V_{\delta t/2}W_{\delta t} V_{\delta t/2}$, where $V_t = \exp(-iH_0t)$ is the free evolution of the system, and $W_t = \exp(-i H_E t)$ is both the free bath evolution and any interactions. Then, for a finite time $t$ discretized into $k$ steps, the time evolution operator can be written as $\mathcal{U}_t = \prod_{i=1}^k V_{\delta t/2}W_{\delta t} V_{\delta t/2}$, up to an error $\mc{O}(\delta t ^3)$~\cite{Childs_2021}. If we take the initial state over $\mc{H}_{S} \otimes \mc{H}_{E}$ to be a product state $\rho_S\otimes \sigma_E$, then it was shown in Ref.~\cite{Pollock2019} that the process tensor can be arbitrarily approximated via 
\begin{equation}
    \begin{split}
        \Upsilon_{k}&\approx (V_{\delta t /2}\otimes V_{\delta t/2}^\ast)^{\otimes k} [\mathcal{F}_{k}]\otimes \rho_0,\quad\text{where,}\\
        \mathcal{F}_{k} &= \sum_{\vec{s},\vec{r}}\text{Tr}_E\left[W_{\delta_t}^{(s_k,r_k)}\cdots W_{\delta_t}^{(s_1,r_1)}[\sigma_E]\right]\times \\
        & |s_k s_k\cdots s_1 s_1\rangle\!\langle r_k r_k\cdots r_1 r_1|.
    \end{split}
\end{equation}
The matrix $\mathcal{F}_{k}$ is known in the literature as the \emph{influence matrix} (IM) -- the matrix form of the IF. Naturally, in its dense form this is exponentially large, but this complexity can be simplified down in both space and time, and it can represented as a tensor network~\cite{Strathearn2018,Lerose2021,Ye_2021,Cygorek2022}. For an environment that factorizes across a finite number of modes $N$, the Liouville superoperator representing the time increment $W_{\delta_t}$ can be written as a matrix product operator with some exact bond dimension that depends on the locality of the interaction. \par 

After $k$ steps of the process are composed together, then pre-contraction the process tensor will be representable as a 2D tensor network, as depicted in Fig.~\ref{fig:influence-trg}a. Here, we delineate the system in green (bottom row), which has free indices at each time as inputs and outputs. The causal structure of this tensor network can be exploited to give it a canonical form~\cite{Pollock2019}, circumventing computational hardness issues in the contraction of a lattice~\cite{haferkamp2020contracting}. Depending on the exact contraction scheme employed (see Ref.~\cite{Ye_2021} for an in-depth analysis), the computational cost is $\mathcal{O}(k\chi_{\rm t}^2 \chi_{\rm s}^2 d_S^2$), where $\chi_t$ and $\chi_{\rm s}$ are the respective temporal and spatial bond dimensions in the network. In the worst case, $\chi_{\rm s}$ can grow like $d_S^{k}$. In many physical instances, however, this is not the case, and one can include bond truncation in the final consideration without sacrificing accuracy. As a concrete example, it was shown analytically in Ref.~\cite{vilkoviskiy2023bound} that the bond dimension of the spin-boson model IF scales polynomially both in the simulation time and the error on the physical observables. Further, the influence matrix of a range of integrable lattice models turn out to also exhibit sub-exponential bond dimension scaling; low `temporal entanglement'~\cite{Lerose2021,Lerose2021prb,Thoenniss2023}.\par

We now show how -- using the TRG approach to renormalize the IM -- we can iteratively construct the process tree via a series of plaquette moves. A similar renormalization was performed in Ref.~\cite{Evenbly_2017} with respect to a lattice of trotterized imaginary time evolution. Our contribution here is firstly the insight that the same type of lattice can be used to represent an IM, and that we can keep the bottom layer free to represent the process tensor. Crucially, the process of renormalizing in this fashion places the process tensor exactly into tree form. At each step of coarse-graining, we simply need to encode our chosen $\mathcal{W}$ or $\mathcal{Y}$-brick structure into the plaquette elements that contain the process tensor. Unlike the direct tree fits, this process does not scale at all with the number of steps, and so can be performed efficiently. Note, however, that it does scale with the size of the environment. 

Consider first a base unit cell of the lattice, such as the dotted-line (red) boxes depicted in Figs.~\ref{fig:influence-trg}(i) and (ii). 
Since each column in this tensor network represents a propagator, it can be placed into an appropriate canonical form~\cite{Ye_2021}. The two in-going horizontal indices then propagate a sub-part of the system-environment state, and the top and bottom indices are coupling terms between different environment modes at different times. In principle, then, from right to left, we then have a trace-preserving map, and from bottom to top we have a superprocess. We leave these possible identities in place, but unless we wish to cut the internal bonds of the IM, it is unnecessary to preserve this structure. Instead, suppose that we can find a plaquette coarse-graining transformation that replaces these four IM sites with a structured network -- shown in Fig.~\ref{fig:influence-trg}(ii) -- incurring minimal error. The triangles here 
are a mapping from the $i$th and $j$th environment modes to some artificial space of size $\chi$: $\mathcal{B}(\mathcal{H}_{E_i})\otimes\mathcal{B}(\mathcal{H}_{E_j})\to \mathcal{B}(\mathcal{H}_{\chi_l})$. This represents some dimensionality reduction of the problem, and in practice each $\chi_l$ can be chosen based on the desired numerical precision. 


It is not obvious how to analytically construct this decomposition of the IM unit cell. As such, we defer to numerical methods that would greedily find a best fit. Note that this is not unusual in the tensor network literature: for example, the MERA has no known analytic form for computing its disentanglers from an arbitrary underlying state~\cite{evenbly_algorithms_2009}. In particular, each element of the plaquette can be individually parametrized. For the bulk of the IM, we can then optimize for the distance with respect to the unit cell:
\begin{equation}
\includegraphics[width=0.65\linewidth, valign=c]{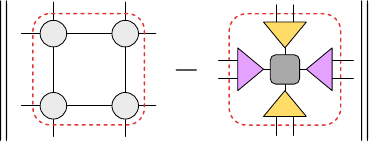}. \label{eq:plaquette-opt}
\end{equation}
We also need to take care of the bottom level, where the system has open indices for inputs and outputs to the process. This can be done similarly, except that the bottom tensor must now be parametrized as a superprocess -- as per the structure of superprocesses we have so far introduced. Imposing a variational form on this map additionally gives us some flexibility in the final tree structure. For instance, if we wished our tree to be composed of either $\mathcal{W}$ or $\mathcal{Y}$-type bricks, respectively, then we could simply encode this parametrization in the superprocess constituents with each optimization:
\begin{equation}
\includegraphics[width=0.65\linewidth, valign=c]{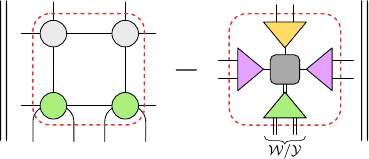}. \label{eq:super-plaquette-opt}
\end{equation}
The particular details of this optimization are not pertinent. We can, of course, adopt a similar strategy as in Section~\ref{sec:grug}. If the IM is translationally invariant in both time and space, then the minimization procedure needs only to be carried out once. Otherwise, we will need $\mathcal{O}(k\cdot N)$ optimizations, which can be carried out in parallel. Finally, we remark that the boundary cases follow similar logic, but will only have three out of the four isometries, reflecting the reduction to six outer indices instead of eight.\par 

With a best-approximation to each unit found, the IM can be recast as the lattice shown in Fig.~\ref{fig:influence-trg}(iii). We note then that the inner products between isometries and superprocesses from adjacent cells reduce into linear maps on the chosen internal bond dimension, depicted in Fig.~\ref{fig:influence-trg}(iv). Both the small boxes on the horizontal (pink) and vertical (yellow) bonds can each be absorbed into new larger box (dark-gray) tensors to define a new lattice. This is apart from the bottom row of coarse-graining superprocesses which have open indices representing input and output legs on $\mc{H}_S$. These remain free, and will iteratively construct the process tree.
This now defines a lattice in which the time and space sites have all been reduced by a factor of four. We can simply repeat this step by coarse-graining the lattice once more, replacing each plaquette as in Fig.~\ref{fig:influence-trg}(v), and end up with the coarse-grained lattice shown in Fig.~\ref{fig:influence-trg}(vi).
We continue this until the IM has been reduced to a single point. The final optimization must now also include the initial environment projection at the very top level. The resulting concatenation of superprocesses mapping the system is now exactly a process tree. This is depicted in Fig.~\ref{fig:influence-trg}(vii).

The purpose of this section is to sketch out an in-principle method by which process trees can be constructed from the microscopic physics which define them. This leaves, naturally, a host of open questions for future work -- including benchmarking, stability, and the determination of best practices for selecting plaquette moves. 
Nevertheless, we have demonstrated that the construction of process trees is well within reach of commonly-studied approaches, both in the context tensor renormalization and the simulation of open quantum systems. For the former, TRG algorithms comprise an extensive body of literature~\cite{Levin_2007,Vidal2007,PhysRevB.78.205116,Evenbly2015,Evenbly_2017,Yang2017}; for the latter, IM methods are key in essentially all state-of-the-art simulation methods for non-Markovian open quantum systems~\cite{Cygorek2022, cygorek_sublinear_2023, cygorek-ace-2024, Strathearn2018,Pollock2019, fux_efficient_2021}. It will be imperative future work to find and characterize open dynamics whose structure is amenable to a tree construction (such as the spin-boson model, which we have demonstrated).

\section{Discussion and Conclusions}\label{sec:discussion}
In this work, we have introduced and {systematically} studied a new class of quantum non-Markovian process tensors (Sec.~\ref{sec:treeconstruct}), which we showed to structurally exhibit power-law correlations and memory (Sec.~\ref{sec:multitime}). We refer to them as \textit{process trees}, and they are naturally suited to represent complex quantum processes. Namely, the process tree can accommodate genuinely quantum long-range correlations (Sec.~\ref{sec:NM-cor}). Moreover, process trees include a notion of temporal scale and are constructed such that observables at different scales and across arbitrarily many times at these scales are efficiently computable. Finally, we have shown that this model can accurately fit the multitime process tensor of the paradigmatic spin-boson model (Sec.~\ref{sec:grug}), {and detailed a method for systematically constructing the ansatz from underlying dynamics (Sec.~\ref{sec:plaquette}).} 

The process tree, including its applications to the spin-boson model, forms a proof of principle model to efficiently represent and simulate complex physical non-Markovian quantum processes. This paper thus opens up many avenues of future research in simulating large-scale complex quantum processes, new tensor network ans\"atze, understanding phases of quantum dynamics, and much more. We discuss each of these avenues in some detail below.

{
Representations of full, many-body quantum states can quickly become expensive using tensor networks. Alternatively, when one is interested in only few-point correlations, isometric tree tensor networks offer efficient encoding methods for long-range systems. In the time domain, we have shown that a similar principle applies. Indeed, existing numerical methods that rely on reconstructing the full process tensor (and hence implicitly encoding many-point correlation functions) similarly tend to have a cost which scales fast. Extending tree tensor network principles to the time domain, the process tree class offers fertile ground for innovations regarding efficient numerical simulation. It remains a pressing task to implement the spatiotemporal TRG method described in Sec.~\ref{sec:plaquette}, or a similar method, for realistic physical models.} Such an algorithm would be predictive, in contrast to the fitting method of Sec.~\ref{sec:grug} which {while informative}, is generally inefficient and relies on an existing process tensor of {the dynamical} system. We stress that this method is underlying-model agnostic, and should apply equally to influence matrices derived from continuous environments~\cite{Strathearn2018,Cygorek2022} or from many-body lattice models~\cite{Banuls2009,Hastings2015,Lerose2021,Lerose2021prb}. {Moreover, the TRG method systematically incorporates temporal environmental scales, such that one could access different levels of coarse observables from the single tensor-network description of the process (tree).} This would be relevant to systems with different physics emergent at different scales, {additionally helping to identify temporal phases of quantum processes.} 

{The model introduced in this work was largely} motivated by the physical models that exhibit long-range temporal correlations, and we adapted structures from tree tensor network theory from many-body physics to {identify} this generic dynamical class. However, MERA tensor networks also exhibit long-range (critical) correlations and have the advantage of not needing averaging to achieve exact polynomially decaying correlations. MERA networks differ from tree tensor networks through alternating layers of two-to-two isometries, called `disentanglers'~\cite{CriticalMERA}. We investigated the possibility of extending process trees to include two-to-two time-slot bricks, such as the $\mc{D}$ and $\mc{F}$ superprocesses in Eqs.~\eqref{eq:D} and \eqref{eq:D1}, but were unable to find an ansatz which exhibited the same desirable properties as the spatial MERA. Part of the problem is the inherent causal ordering of indices in a process, such that one cannot possibly find a nontrivial two-to-two time superprocess that is isometric, i.e. which perfectly preserves information going between scales. It would be interesting to investigate this question further, as it may relate to more formal aspects of change-of-scale (renormalization group) transformations in the temporal setting, including identifying phases and universal features of quantum dynamics. {Recent advances in temporal scale simulation~\cite{PhysRevX.14.021007} and in lattice RG techniques~\cite{PhysRevX.14.021008}, may help in this research program.} 

Interpreting the fine-graining as depicted in Fig.~\ref{fig:circuit}, we can see that going to `finer' time scales (lower down the tree) involves introducing more dynamics (tensor boxes). Physically, this represents the fact that there exists time regimes in dynamics where certain Hamiltonian terms are relevant, and when they become irrelevant. For example, high-frequency modes of a Hamiltonian may be relevant only at a fine scale, but average-out at a coarse scale (such as via the rotating wave approximation~\cite{burgarth2023taming}). One could also connect this with master equation descriptions of open quantum system dynamics, which can be derived exactly from the process tensor~\cite{Pollock2018tomographically}. Looking at Fig.~\ref{fig:causalcone1}(i), we can see that with the scale consistency condition, $\mc{W}-$class processes have a restricted amount of information transferred from the past. That is, adding a single scale layer adds a single $\mc{W}-$brick tensor, but describes a process tensor with twice as many time steps. This fits with the idea of approximate Master equations from a restricted/efficient memory kernel~\cite{Pollock2019,pollockscipost}, and may be related to notions of the complexity of open quantum systems~\cite{isobel2022,guo2022memory}.

Finally, we remark that the process tree can be prepared using only two-body gates in a laboratory setting. Therefore, in principle, process trees can be engineered, and the various structural properties that we have described here could also be verified experimentally. Our results hence pave a way forward for engineering and simulation of complex quantum processes based on the ansatz introduced in this work.

\begin{acknowledgments}
We thank Isobel A. Aloisio for early work and useful conversations. We thank Erik Gauger for his insights into the spin-boson model. KM thanks Amelia Liu, Nathan McMahon, and Felix A. Pollock for early conversations leading to initial ideas. ND thanks Georgios Styliaris for useful technical discussions on tensor networks. ND is supported by an Australian Government Research Training Program Scholarship and the Monash Graduate Excellence Scholarship. KM is grateful for the support of the Australian Research Council via Future Fellowship FT160100073, Discovery Projects DP210100597 and DP220101793.
\end{acknowledgments}


%

\appendix

\section{Tensor Networks} \label{ap:tn}
A \textit{tensor} $T_{ijk...}$ is a multi-dimensional array of real or complex numbers. The tensor components are located within the tensor via a set of indices $i,j,k,...$. 
The number of values an index runs over is the \textit{dimension} of that index. For instance, if $i=1,2,...,d$ then index $i$ has dimension equal to $d$. We denote the index dimension as $|i| = d$.
The size of a tensor -- the number of components it has -- is equal to the product of the dimensions of all its indices.

Common examples of tensors are vectors and matrices. A vector (ket) $|\psi\rangle = \sum_i \psi_i|i\rangle$ inside a vector/Hilbert space $\mathcal{H}$ is described by tensor $\psi_i$ with one index, where $|i\rangle$ denotes a basis in $\mathcal{H}$, and its dual (bra) $\langle\psi| = \sum_i \psi^*_i\langle i|$. (${}^*$ denotes complex conjugation.) Similarly, a matrix $U$ acting on $\mathcal{H}$ is described by a two-index tensor $U_{ij}$ as $U = \sum_{ij} U_{ij} |i\rangle\langle j|$. Higher-order tensors are arrays with more than two indices. For instance, a two-body gate in a quantum circuit -- for example, a CNOT gate -- is described by a four-index tensor $V_{ijkl}$ with two bra indices $i,j$, and two ket indices $k,l$. In fact, CNOT is a \textit{multi-index matrix}, a particular type of higher-order tensor obtained by reshaping a matrix.

We graphically represent tensors as shapes (such as a circle, triangle, or square) with wires emanating from them. See Fig.~\ref{fig:graphical}. The wires correspond to the indices of the tensors. In this paper, most diagrams in this paper are time-ordered, which results in an implicit orientation of ket and bra indices/wires. We follow the convention that time flows from the left to the right, which means that indices emanating to the left (right) of a shape are bra (ket) indices.

\begin{figure}[t]
    \centering
    \includegraphics[width=8cm]{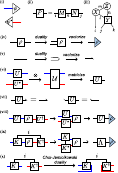}
    \captionsetup{justification=justified,singlelinecheck=false} 
    \caption{Graphical representation of some elementary mathematical objects pertaining to quantum dynamics. (i) Ket $\ket{\Psi}$ and corresponding bra $\bra{\Psi}$. The indices/wires of a matrix and its hermitian conjugate (adjoint) are colored blue and red, respectively. (ii) Matrix multiplication $P = MN$. (iii) A simple tensor network comprised of a contraction of three tensors $X, Y$ and $Z$. (iv) A density matrix $\rho$ in matrix and its vectorized form $\left|\rho \right\rangle \! \rangle$ by applying the state-functional duality, $|i\rangle \rightarrow \langle i|$. (In this paper, we represent \textit{normalized} density matrices as blue triangles.) (v) The Identity matrix $I$ (as a do-nothing operation on a wire) and its vectorized form $\left|I \right\rangle \! \rangle$.  (vi) Matrix form of a unitary map $\mathcal{U}(.) = U (.) U^\dagger$. (vii) The action of a unitary map $\mathcal{U}(.)$ on a the Identity, and (viii) on a density matrix $\rho$. (ix) A generic CP map with Kraus operators $\{K_i\}_i$. (x) The Choi-Jamiolkowski duality or isomorphism between CP maps and density matrices corresponds to applying the state-functional duality on the red and blue wires as shown.}
    \label{fig:graphical}
\end{figure}

Elementary tensor operations are also represented graphically. A matrix-vector multiplication, $w = Mv$,  is represented by connecting the bra wire of $M$ with the single (ket) wire of $v$. Analogously, the product of two matrices $P = MN$ is depicted by connecting together the ket wire of matrix $M$ and the bra wire of matrix $N$, Fig.~\ref{fig:graphical}(ii). The computational cost of multiplying a matrix $M_{ik}$ with a matrix $N_{kj}$ matrix is proportional to $|i||j||k|$.

Matrix multiplication can be generalized to the multiplication or contraction of any number of tensors. A contraction of a set of tensors can be specified by interconnecting their wires (consistent with their ket-bra orientations) according to a graph. Such a set of interconnected tensors is called a \textit{tensor network}. 
The computational cost of contracting a tensor network is proportional to the product of the dimensions of all the indices in the tensor network, generalizing the rule for estimating the computational cost of matrix multiplication. For example, the contraction cost of contracting the simple tensor network depicted in Fig.~\ref{fig:graphical} is proportional to $|i||j||k||l||m||n|$.

Tensor networks Refs.~\cite{Orus2019, Cirac2021, Bridgeman2017} are widely used in quantum many-body physics as efficient representations of complex quantum many-body wavefunctions. Examples of popular tensor networks include (1) Matrix Product States (MPSs), which consist of contracting tensors in a linear geometry, (2)  Tree Tensor Networks (TTNs) and the Multi-scale Entanglement Renormalization Ansatz (MERA), which are hierarchical tensor networks based on hyperbolic geometry, and (3) Projected Entangled Pair States (PEPSs), which generalize MPSs in higher-dimensions.

The graphical representation of the basic mathematical objects encountered when describing quantum processes is shown in Fig.~\ref{fig:graphical}. Here, we have used red and blue colors to distinguish between indices of elementary matrices, $M$, and their hermitian conjugate (adjoint), $M^\dagger$. The matrix forms of superoperators are not regarded as elementary matrices, and their indices -- which are tensor products of red and blue indices, and label a basis in the space $B(\mathcal{H})$ of bounded operators acting on a Hilbert space $\mathcal{H}$ -- are colored black.

A matrix $M_{ij}$ can be \textit{vectorized} into a vector $m_a$ by applying duality and grouping together the bra and ket indices, $i$ and $j$, into a single index $a$.\textsuperscript{\footnote{This group is realized by a canonical injective map  $(i,j) \rightarrow a$ such as the ones underlying the \texttt{reshape()} functions of Python and MATLAB.}} Figs.~\ref{fig:graphical}(iv)-(v) depict the vectorization of a density matrix $\rho$ and the Identity matrix, which is depicted simply as a wire.

A superoperator -- a map between matrices -- can be analogously \textit{matricized} by combining indices, as shown in Fig.~\ref{fig:graphical}(vi). Here, shown is the matricization of a unitary map $\mathcal{U} $ into a matrix -- which by slight abuse of notation, we call $\mathrm{U} = U \otimes U^*$ -- by grouping the left and right pairs of red and blue indices as black indices. The matricized map acts on vectorized Identity and density matrices, as shown in Fig.~\ref{fig:graphical}(vii)-(viii). Fig.~\ref{fig:graphical}(ix) depicts the matricization of a generic CP map with Kraus operators ${K_i}_i$. In the quantum dynamics literature, the matrix form of a CPTP map (for instance, the unitary map $\mathrm{U}$) is called the \textit{Liouville representation} of the map \cite{Wolf_undated-dn,wood_tensor_2015,OperationalQDynamics}.

\section{Translationally Invariant MPS Process} \label{ap:linear}
Here we review the generic correlation structure of a process with a restricted memory size. This will be in order to contrast the correlation structures exhibited by a general MPO process tensor discussed in past literature~\cite{processtensor,processtensor2,Gu2018,Gu2018A,Guo2020,white_non-markovian_2022}, with the process tree which we study in this work.

 In many practical situations, an unknown environment may affect a dynamics, but not carry over any significant memory (quantum information) to future interventions on the local system $\s$. Such processes are particularly amendable to numerical simulation, such as through an matrix product operator (MPO) ansatz for the corresponding process tensor. To showcase correlation behavior in this setting, consider a generic process which admits a (temporally) translation invariant MPO representation. That is, we define the MPO process $\Upsilon^{(\mathrm{MPO})}$, consisting of dynamics of a repeated CPTP map $\Lambda$ on a finite system, 
\begin{align}
    \Upsilon^{(\mathrm{MPO})}_{k}[\Lambda] &= \tr_E[\Lambda \star \dots \star \Lambda \star \rho] \\
    &= \includegraphics[scale=0.85, valign=c]{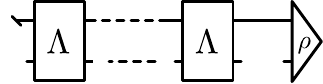}, \nn
\end{align}
where $\Lambda$ appears $n $ times.
It is well known that the repeated application of a generic CPTP map leads to exponentially decaying two-point correlations. In App.~\ref{ap:proof} we prove the following result to this to \emph{memory}, showing that non-Markovianity generically also decays exponentially, given a constant-sized environment.\textsuperscript{\footnote{This result could likely be extended to the situation of different CPTP maps, equivalent to noisy, time-dependent evolution; such as e.g. `ergodic sequences' of CPTP maps from Ref.~\cite{Movassagh2021}.}}
\begin{restatable}{prop}{MPS} \label{thm:MPS}
      In an MPO process $\Upsilon_{\mathrm{MPO}}[\Lambda]$ with a constant environment dimension $d_E$ and generic $\Lambda$, then in the asymptotic limit $n \delta t = t_b - t_a \to \infty$,
      \begin{equation}
          \eta(\Upsilon^{(\mathrm{MPO})}_{k}[\Lambda];t_a,t_b) \sim \ex^{- |t_b - t_a|/{\xi}},
      \end{equation}
      where $\xi >0$.
\end{restatable}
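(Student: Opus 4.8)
The plan is to reduce the decay of $\eta$ to the spectral gap of a single fixed transfer map, in close parallel with the argument behind Theorem~\ref{thm:polyDecay}. First I would write the two-slot marginal $\ups_{b,a}$ explicitly. Inserting the do-nothing supermap $\cup$ at every slot strictly between $a$ and $b$, and fixing uncorrelated preparations/measurements on the slots outside the interval, the only object linking the two marked slots is the propagation of the $SE$ wires across the $n-1$ intermediate steps, where $n = (t_b-t_a)/\delta t$. Because the $\Lambda$'s are identical, this is governed by a fixed single-step transfer map $\mathcal{T}$ on $\mathcal{B}(\s \otimes \e)$ -- the Liouville representation of $\Lambda$ dressed by the do-nothing operation, directly analogous to $\mathcal{W}_{R/L}$ in Eq.~\eqref{eq:Tdefa} -- whose $(n-1)$-th power appears sandwiched between the local tensors at $a$ and $b$. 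The causal break intrinsic to $\eta$ resets the direct system link, so that after this reduction $\ups_{b,a}$ is a contraction of the schematic form (local tensor at $a$)$\,\cdot\,\mathcal{T}^{\,n-1}\cdot\,$(local tensor at $b$) with fixed boundary data.

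Second, I would control $\mathcal{T}^{\,n-1}$ spectrally. Since $\Lambda$ is CPTP and $\cup$ preserves the trace, $\mathcal{T}$ is a trace-preserving completely positive map on a space of constant dimension $(d_S d_E)^2$. By the (modified) quantum Perron--Frobenius theorem already invoked for Theorem~\ref{thm:polyDecay}, a generic (primitive) $\mathcal{T}$ has spectral radius $1$, attained by a unique eigenvalue with a strictly positive full-rank fixed point $\tau_\star$, together with a spectral gap $|\lambda_2|<1$ to the rest of the spectrum. Decomposing $\mathcal{T}^{\,n-1} = \mathcal{P} + \mathcal{R}_{n-1}$, where $\mathcal{P}(\cdot) = \tau_\star\,\tr[\,\cdot\,]$ is the fixed-point projector and $\|\mathcal{R}_{n-1}\| \le C\,|\lambda_2|^{\,n}$, the projector term replaces the mediating environment by its stationary state independently of the input and therefore factorizes the contraction exactly into the product of the two single-slot marginals. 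Hence, up to corrections of the same order,
\begin{equation}
    \big\| \ups_{b,a} - \ups_a \otimes \ups_b \big\|_1 \;\le\; C'\,|\lambda_2|^{\,n}.
\end{equation}

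Third, I would convert closeness to a product state into smallness of the mutual information. Since $\eta = S(\ups_{b,a}\|\ups_a\otimes\ups_b)$ equals the quantum mutual information of $\ups_{b,a}$ across the $a|b$ cut, while $\ups_a\otimes\ups_b$ carries zero mutual information, continuity of mutual information (an Alicki--Fannes--Winter / Fannes--Audenaert estimate) bounds $\eta$ by $c_1\,\epsilon\,\log(d_S d_E) + c_2\,h(\epsilon)$, with $\epsilon = \tfrac12\|\ups_{b,a}-\ups_a\otimes\ups_b\|_1 = O(|\lambda_2|^{\,n})$ and $h$ the binary entropy. Because the slot dimension is held constant, $\log(d_S d_E)$ is a fixed constant and the dominant contribution is $h(\epsilon)\sim n\,|\lambda_2|^{\,n}$. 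Thus $\eta \le C''\,n\,|\lambda_2|^{\,n}$, which for any $\xi$ with $1/\xi < \delta t^{-1}\log(1/|\lambda_2|)$ yields $\eta \sim \ex^{-|t_b-t_a|/\xi}$, identifying the temporal correlation length with the inverse log of the subleading eigenvalue, $\xi^{-1} = \delta t^{-1}\log(1/|\lambda_2|)$.

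The main obstacle is the spectral-gap step. Establishing that the maximal eigenvalue of $\mathcal{T}$ is nondegenerate with $|\lambda_2|<1$ for \emph{generic} $\Lambda$ requires primitivity (irreducibility plus aperiodicity) of the derived transfer map, which can fail on measure-zero sets -- for example $\Lambda$ possessing decoherence-free subspaces or exact periodic structure -- so I would argue that genericity makes this hold almost surely, mirroring the spectral hypotheses flagged for Theorem~\ref{thm:polyDecay}. A secondary subtlety is that $\mathcal{T}$ need not be unital, so $\tau_\star$ is a density operator rather than the identity and $\mathcal{P}$ must be taken with respect to the correct left and right eigenvectors; one must also check that the fixed boundary tensors at $a$ and $b$ only renormalize constants and do not alter the exponential rate. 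Finally, upgrading the stated $\sim$ from an upper bound to a genuine two-sided asymptotic would require a matching lower bound showing that the $|\lambda_2|$ mode genuinely couples the two slots, which holds generically but is strictly sharper than what the exponential upper bound demands.
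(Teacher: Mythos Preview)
Your proposal is correct and follows the same three-stage architecture as the paper's proof: a quantum Perron--Frobenius spectral-gap argument for the CPTP transfer map $\Lambda$, exponential closeness of $\ups_{b,a}$ to $\ups_a\otimes\ups_b$ in trace norm, and then Fannes--Audenaert continuity to bound the mutual information. The only minor difference is in the middle step: the paper first establishes exponential decay of connected two-point correlators and lifts this to the trace distance via the operational POVM characterisation of $\|\cdot\|_1$ together with a triangle inequality, whereas you obtain the trace-distance bound directly from the spectral decomposition $\mathcal{T}^{\,n-1}=\mathcal{P}+\mathcal{R}_{n-1}$; your route is slightly more streamlined, while the paper's makes explicit contact with the two-point functions that are its running theme.
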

We now offer a sketch of a proof of this result. For a generic (irreducible) CPTP map $\Lambda$, for large $\Delta t$, two-point connected correlations decay exponentially. As this is true for any choice of observables, this directly leads to the states $\ups_{b,a}$ and $\ups_a \otimes \ups_b$ being exponentially close to each other, according to operational meaning of the one-norm distance. We can then apply the Fannes-Aubenaert inequality~\cite{Wilde_2011}, which roughly states that when two density matrices are close, so are their entropies. This difference of entropies $S(\ups_a \otimes \ups_b)-S(\ups_{b,a})$ is exactly the quantum mutual information measure of non-Markovianity, $ \eta(\ups; t_a, t_b)$ from Eq.\eqref{eq:NM2}.

The two key properties leading to this exponential decay property here is that (i) the process $\Upsilon_{\mathrm{MPO}}[\Lambda]$ has a bounded size memory, and (ii) it has a linear (MPO) geometry. This is analogous to the state case, where an area law -- a bounded bond dimension -- is generically equivalent to exponentially decaying correlations~\cite{Eisert2010area,brandao2015exponential}. Of course, a perfectly coherent system-environment (unitary) interaction with a constant sized environment would lead to recurrences and hence non-monotonically decaying temporal correlations. Such interactions are rather atypical in a practical sense, given the difficulty in isolating a quantum system from outside noise. For noisy interactions (represented by CPTP maps), Prop.~\ref{thm:MPS} indicates that a finite sized environment generically leads to exponentially decaying temporal correlations. However, within this translationally invariant (linear) MPS geometry, it is not possible to achieve all possible correlation structures. There are relevant physical examples that exhibit a strong memory for all times, which can not be efficiently modeled with an MPO processes, for example the spin-boson model which we examine in Sec.~\ref{sec:grug}. This motivates our introduction of the process tree, which utilizes tensor-tree-like geometry to produce multitime processes with structurally slowly decaying correlations and memory.

\section{Formal Properties of General Process Trees} \label{ap:y-type}
\begin{figure}[t]
    \centering
    \includegraphics[width=\columnwidth]{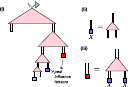}
    \captionsetup{justification=justified,singlelinecheck=false} 
    \caption{Without the scale consistency condition Eq.~\eqref{eq:isometry}, contracting the process tree in computing correlation functions is not as efficient, as one generally needs to compute non-trivial influence tensors. old: (iii) The tensor network contraction that equates to the expectation value of the intervention $A$ in a $\mc{Y}$-type process tree. }
    \label{fig:influence_tensors}
\end{figure}
In this section, we discuss additional some properties of one-to-two time superprocesses, extending the exposition of Sec.~\ref{sec:treeconstruct}. For a $\mc{Y}-$type process tree, i.e. that without the scale consistency condition Eq.~\eqref{eq:isometry}, one needs to keep track of the effect of inserting identity superoperators $\cup$ across times without observables/instruments, and across fine time scales.

General process trees instead are in terms of a general one-to-two superprocess as in Eq.~\eqref{eq:y_brick}. As discussed in Sec.~\ref{sec:coarse}, such a transformation does not need preserve the identity instrument when coarse graining up the process tree, $\mc{Y}^{\mathrm{T}}|{\cup }\rrangle = |{X}\rrangle $, where $\mathrm{X} \neq \cup$ in general. Therefore, even a single-time correlation function could in principle depend on all $O(2^N)$ tensors $\{ \mc{Y}^{(j)}_s\}$ in the tree. This fact is represented by the scale causal cone shown in Fig.~\ref{fig:causalcone1} (ii), in comparison to $\mc{W}-$class bricks as shown in Fig.~\ref{fig:causalcone1} (i). When computing correlation functions, one needs to keep track of a range of `influence tensors', as shown in Fig.~\ref{fig:influence_tensors}. This could still, in-principle, be done relatively efficiently for a homogeneous process tree. However, one needs to carefully choose the height of the process tree $N$, as such a choice would have a significant impact on observables, in contrast to the $\mc{W}-$brick case.

\subsection{Proof that $\mc{Y}$ is a superprocess} \label{ap:superprocess}
We here prove explicitly that $\mathcal{Y}$ as defined in Eq.~\eqref{eq:y_brick} is a valid superprocess. To prove this, we need to show that when acting on any step of an arbitrary process $\ups_{k}$, the output $\ups_{k+1}^\prime = \mathcal{Y} \ups_{k}$ satisfies (i) positivity and normalization Eq.~\eqref{eq:normalization} and (ii) the affine causality conditions Eq.~\eqref{eq:causality}.

Recall the general expression for a process Eq.~\eqref{eq:pt_link},
\begin{equation}
    \ups_{k}=\includegraphics[scale=0.85, valign=c]{graphics/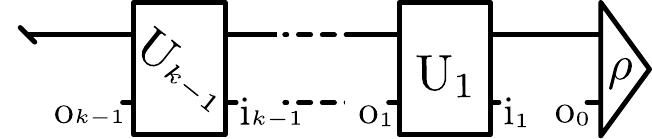}. 
\end{equation}
and for the general $\mc{Y}$ brick,
\begin{equation}
        \mathcal{Y}^{\mathrm{i}_c \mathrm{o}_c}_{\mathrm{i}_{f} \mathrm{o}_{f} \mathrm{i}_{\bar{f}} \mathrm{o}_{\bar{f}} }=\includegraphics[scale=1.5, valign=c]{graphics/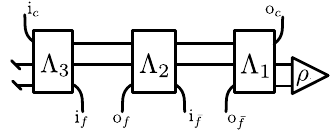}.  \label{eq:w_brick1}
\end{equation}
Then we can write out the full expression of a single fine-graining map acting on a time slot of an arbitrary process tensor $\ups_{k}$
\begin{align}
       (\ups_{k+1}^\prime)^{ \mathrm{o}_k \mathrm{i}_{k} \dots \mathrm{i}_1 \mathrm{o}_0}_{\mathrm{i}_{f} \mathrm{o}_{f} \mathrm{i}_{\bar{f}} \mathrm{o}_{\bar{f}} } &=\mathcal{Y}^{\mathrm{i}_c \mathrm{o}_c}_{\mathrm{i}_{f} \mathrm{o}_{f} \mathrm{i}_{\bar{f}} \mathrm{o}_{\bar{f}} } ( \ups_{k} )^{ \mathrm{o}_k \mathrm{i}_{k} \dots \mathrm{i}_c \mathrm{o}_c \dots \mathrm{i}_1 \mathrm{o}_0} \nn\\
       &= \includegraphics[scale=0.6, valign=c]{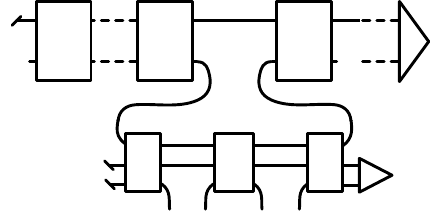} \label{eq:upsP}
\end{align}
where we have dropped the labels in the graphical representation in the second line for comprehensibility. Then the two conditions detailed above can be directly proven from this expression:
\begin{enumerate}[label=\text{(\roman*)}.]
    \item Eq.~\eqref{eq:upsP} is a network of a composition of CPTP maps, and so the  positivity and normalization conditions of Eq.~\eqref{eq:normalization} are satisfied.
    \item The causality conditions Eq.~\eqref{eq:causality} for $ (\ups_{k+1}^\prime)$ are inherited from $ (\ups_{k})$ for outputs at times $\mathrm{o}_{c+j}$ for $j \geq 2$. We can then graphically show that the conditions are also preserved on the $f, \bar{f}$ indices, 
    \begin{align}
        \tr_{\mathrm{o}_{c+1}}[ \ups_{c+1:0}^\prime] &= \includegraphics[scale=0.6, valign=c]{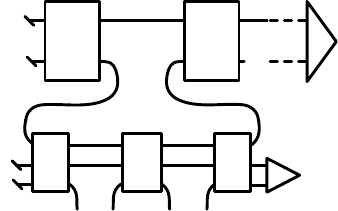} \nn \\
        &= \includegraphics[scale=0.6, valign=c]{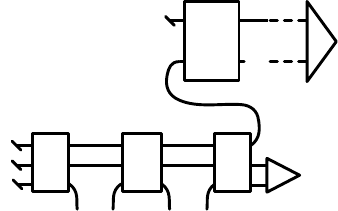}  \\
        &= \includegraphics[scale=0.6, valign=c]{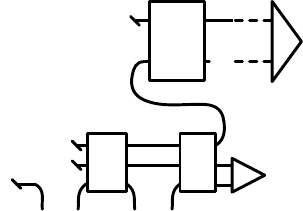} \nn \\
        &= | \id_{\mathrm{i}_{f}} \rangle \! \rangle \otimes \ups_{f:0}^\prime,\nn
    \end{align}
    where we have used that trace commutes through TP maps from the left. We can then apply the same techniques to further prove that 
    \begin{align}
        &\tr_{\mathrm{o}_{{f}}}[ \ups_{f:0}^\prime]= | \id_{\mathrm{i}_{\bar{f}}} \rangle \! \rangle \otimes \ups_{\bar{f}:0}^\prime, \text{ and} \\
        &\tr_{\mathrm{o}_{{\bar{f}}}}[ \ups_{\bar{f}:0}^\prime]= | \id_{\mathrm{i}_{\bar{f}}} \rangle \! \rangle \otimes \ups_{c-1:0}^\prime = | \id_{\mathrm{i}_{c-1}} \rangle \! \rangle \otimes \ups_{c-1:0}.\nn
    \end{align}
\end{enumerate}
Therefore $\mc{W}$ is a valid superprocess, and so the process tree in Fig.~\ref{fig:generaltree} is a process tensor. 

\subsection{Gauge Freedom in $\mc{W}-$type Process Trees}
Here we describe that we can change the scale consistency condition to instead be with respect to any other unitary map, by only modifying a $\mc{W}-$class process tree through local unitary operations on the finest ($s=N$) and coarsest scales ($s=0$). 

We start with a $\mc{W}-$class superprocess, satisfying Eq.~\eqref{eq:isometry}. Consider an arbitrary unitary matrix $V$, where as usual its Liouville superoperator representation is $\mathrm{V}= V \otimes V^*$ (see Sec.~\ref{sec:background}). We modify the $\mc{W}-$brick as follows 
\begin{equation}
    \mc{W}_{\mathrm{V}} := \includegraphics[scale=2.5, valign=c]{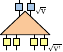},
\end{equation}
where the yellow (bottom) boxes represent $\sqrt{\mathrm{V}^{{\dagger}}}$, while the blue (top) boxes represent $\sqrt{\mathrm{V}}$. Then $\mc{W}_{\mathrm{V}}$ satisfies a new scale consistency condition with respect to the unitary instrument $\mathrm{V}$, instead of the identity supermap $\cup$ as in Eq.~\eqref{eq:isometry},
\begin{equation}
    \mc{W}_{\mathrm{V}} |{ \mathrm{V}}\rrangle |{ \mathrm{V}}\rrangle = |{ \mathrm{V}}\rrangle.
\end{equation}
One can see this directly from the fact that $\sqrt{\mathrm{V}^{{\dagger}}} \mathrm{V} \sqrt{\mathrm{V}^{{\dagger}}} =\cup$ (the identity map), and then by using the property Eq.~\eqref{eq:isometry} for $\mc{W}$. Constructing an entire process tree from $\mc{W}_{\mathrm{V}}$ leads to canceling of the blue and yellow blocks ($\sqrt{\mathrm{V}}$ and $\sqrt{\mathrm{V}^{{\dagger}}}$ within the bulk, and is equivalent to the usual a $\mc{W}-$class process tree, except by yellow bricks at every site on the finest scale ($s=N$), and blue bricks at the coarsest.

\section{Proofs} \label{ap:proof}
\polyDecay*

\begin{proof}
Let $\ups = \tree_N $ for some $N$. Consider temporal correlation functions at a duration of $\Delta t = t_n-t_0=2^n-1$ steps, where $n \in \mathbb{Z}_{\geq 1}$. For simplicity we also here we take $t_0=0$ to be on the leftmost part of the Tree (earliest time), and so $t_n = 2^n-1$. Graphically, using the scale consistency condition \eqref{eq:isometry} these correlation functions reduce to 
    \begin{align}
    &\braket{A_{t_2}^\prime A_{t_0}}_{\ups}=\includegraphics[scale=1.2, valign=c]{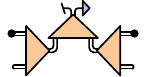},\nn \\
    & \braket{A_{t_3}^\prime A_{t_0}}_{\ups}=\includegraphics[scale=1.2, valign=c]{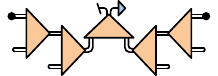}, \\
    & \braket{A_{t_n}^\prime A_{t_0}}_{\ups} \!=\!\includegraphics[scale=1.2, valign=c]{graphics/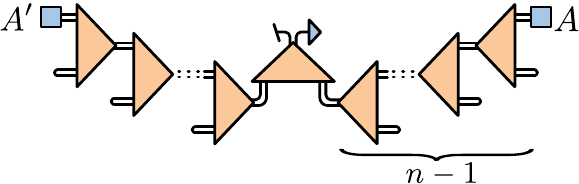} .\nn
\end{align}
In particular, we see a repeated structure with increasing $n$. This is the key expression of the proof. We can see that this is a function of a transfer matrix $\mathcal{W}_R$,
\begin{equation}
    \braket{A_{t_n}^\prime A_{t_0} }_{\ups} = \llangle{A^\prime} | \mathcal{W}_L^{n-1} \mc{W}^{(\mathrm{b})}  (\mathcal{W}_R^{{T}})^{n-1} |A\rrangle
\end{equation}
where the superscript $T$ denotes transpose, and 
\begin{equation}
    \mathcal{W}_R:= \includegraphics[scale=1.2, valign=c]{graphics/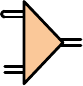}, \, \mathcal{W}_L:= \includegraphics[scale=1.2, valign=c]{graphics/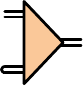}, \, \mathrm{W}^{(\mathrm{b})} := \includegraphics[scale=1.2, valign=c]{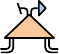}. \label{eq:Tdef}
\end{equation}
The first two matrices are what we call right/left descending maps, and are key elementary operations in computing correlation functions  numerically, as we discuss in App.~\ref{ap:numerics}. We assume that the process tree is deep enough ($N \gg 0)$ such that the contribution from the middle tensor, $\mc{W}^{\mathrm{(b)}}$, is approximately constant with increasing $n$ (and hence $\Delta t$). The task is then to show that this transfer matrix $\mathcal{W}_R$ is dominated by a subleading eigenvector as $n \to \infty$, and that the corresponding eigenvalue satisfies $|\lambda | < 1$. Then we can show that the connected correlation function $|\braket{A_{t_0} A_t}_{\ups}- \braket{A_{t_0}}_{\ups} \braket{A_t}_{\ups}|  $ decays exponentially with $n$, and therefore polynomially so with $\ell$. To prove this, we need two ingredients: complete positivity (CP) and the existence of a (unique) eigenvector with eigenvalue $1$. Then the Russo-Dye Theorem ensures that all eigenvalues lie within the complex unit disk $| \lambda | \leq 1$~\cite{Wolf_undated-dn}, and that there exists a single dominant eigenvector with the maximal $\lambda  = 1$ (see the proof for Thm. 2.3.7 in Ref.~\cite{Bhatia2007}).

 Choosing the intervention separation $ \Delta t =2^n-1$ ensures that the causal cone of each intervention consists only of the left or right descending maps. Therefore, the correlator Eq.~\eqref{eq:corr_function} is dictated by the $(n-1)^{\mathrm{ th}}$ power of $\mc{W}_L$ and $\mc{W}_R^{\mathrm{T}}$, in each causal cone.

We will first show that $\mathcal{W}_R^{\mathrm{T}}$ is CP. We will do this by showing that the Liouville superoperator $\mathcal{W}_R^{\mathrm{T}}$ can be written in terms of the Kraus operators $K_i$ as
\begin{equation}
    \mathcal{W}_R^{\mathrm{T}} = \sum_i K_i \ot K^*_i. \label{eq:Kraus}  
\end{equation}
Here, $K_i$ acts on the ket parts of the input, and $K_i^*$ on the bra, with the index $i$ being generated from all lines connecting the two `halves'. Being able to write the Liouville superoperator representation of a map in this way is equivalent to the map admitting a Kraus operator sum representation (with equal left and right operators), and this is possible only for CP maps~\cite{Wolf_undated-dn,wood_tensor_2015,OperationalQDynamics}.

To prove Eq.~\eqref{eq:Kraus} we will use the definition of $\mathcal{W}_R^{\mathrm{T}}$, Eq.~\eqref{eq:Tdef}, in terms of the $\mathcal{W}$  superprocess definition Eq.~\eqref{eq:w_brick}. Explicitly, 
\begin{equation}
    \begin{split}
        \mathcal{W}_R^{\mathrm{T}}&= \includegraphics[scale=1.2, valign=c]{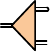}  \\
    &=\includegraphics[scale=1.5, valign=c]{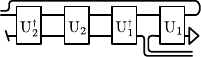}  \\
    &=\includegraphics[scale=1.5, valign=c]{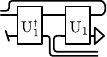}  \\
    &=\includegraphics[scale=1.5, valign=c]{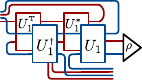}  \label{eq:Tkraus}\\
    &=\sum_{i_1,i_2} \bra{i_2}\rho \ket{i_2} \left( \includegraphics[scale=1.5, valign=c]{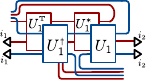} \right)  \\
    &= \sum_{\vec{i}=(i_1,i_2)} K_{\vec{i}} \ot K^*_{\vec{i}}, 
    \end{split}
\end{equation}
where in the fourth line we have `unfolded' the tensor network diagram, such that wires represent a bra/ket contraction, and boxes are unitary matrices with italic labels. This is in contrast to the two preceding lines, where boxes represent the superoperator representation of a unitary map with upright font labels; see Eq.~\eqref{eq:PT} and the surrounding exposition. For clarity, we have coloured all the ket objects in red, and the bra objects in blue. The trace at the end then corresponds to connecting the ket and bra copies through the orthonormal basis of projections $\bra{i_2}$, and the initial state $\rho =  \sum_{i_2} \bra{i_2}\rho \ket{i_2}  \ket{i_2} \bra{i_2}$ provides another connection between the copies, with $\ket{i_2}$ being the (diagonal) eigenbasis of $\rho$. With $\vec{i}$ in Eq.~\eqref{eq:Tkraus} being the combined index of $(i_1,i_2)$,  
\begin{equation}
    K_{\vec{i}}=\sqrt{\bra{i_2}\rho \ket{i_2}} \left( \includegraphics[scale=1.4, valign=c]{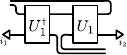}\right),
\end{equation}
and so $\mathcal{W}_R^{\mathrm{T}}$ can be written in the form of Eq.~\eqref{eq:Kraus}. $\mathcal{W}_R^{\mathrm{T}}$ is therefore CP~\cite{OperationalQDynamics}. Note that this does not depend on the particular choice of $\mc{W}$ in terms of $\mathrm{U}_1$ and $\mathrm{U}_2$, compared to other maps, and a similar proof to \eqref{eq:Tkraus} would hold for other choices of $\mc{W}$ satisfying the scale consistency condition Eq.~\eqref{eq:isometry}.

Now, due to the scale consistency condition \eqref{eq:isometry}, we know there exists an eigenvector with eigenvalue one. This is the identity superoperator $| \cup \rrangle$,
\begin{equation}
   \mathcal{W}_R^{\mathrm{T}} | \cup \rrangle = \includegraphics[scale=1.2, valign=c]{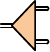}= \includegraphics[scale=1.2, valign=c]{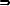} = | \cup \rrangle. 
\end{equation}
We will assume for now that this {is} the unique eigenvector (i.e. that $\mathcal{W}_R^{\mathrm{T}}$ is irreducible~\cite{Wolf_undated-dn}), and later argue that this is the generic case. Now, an equivalent argument to that above applies to the case of $\mathcal{W}_L$ via left-contraction (left matrix multiplication), such that it is CP with a unique left eigenvector.

As $| \cup \rrangle$ is a right eigenvector with maximum eigenvalue $\lambda_{\cup}=1$, we can write 
{\begin{equation}
    \lim_{n \to \infty} (\mathcal{W}_R^{\mathrm{T}})^{n-1} \approx | \cup \rrangle\llangle{\lambda_R}|,
\end{equation}}
where we leave the left eigenvector $|{\lambda}_R\rrangle$ unspecified, but know that $\llangle{\lambda_R | \cup}\rrangle = 1$~\cite{Wolf_undated-dn}. An equivalent statement applies to $\mathcal{W}_L$ acting through left multiplication.
In this case, for large $n$, to first order two-point correlations for the process tree factorize,
\begin{align}
    \lim_{n \to \infty} \braket{A A_t}_{\ups} &= \lim_{n \to \infty} \llangle A | (\mathcal{W}_L)^{n-1} \mc{W}^{(\mathrm{b})}  (\mathcal{W}_R^{\mathrm{T}})^{n-1} |A\rrangle \nn \\
    &\approx \llangle{A | \lambda_L}\rrangle \llangle \cup |  \mc{W}^{(\mathrm{b})} | \cup \rrangle \llangle{\lambda_R | A}\rrangle \\
    &=\llangle{A | \lambda_L}\rrangle\llangle \cup |  \mc{W}^{(\mathrm{b})} | \cup \rrangle \llangle \cup |  \mc{W}^{(\mathrm{b})} | \cup \rrangle \llangle{\lambda_R | A}\rrangle \nn\\
    &=\lim_{n \to \infty} \llangle A | \mathcal{W}_L^{n-1}  \mc{W}^{(\mathrm{b})} | \cup \rrangle \llangle \cup |  \mc{W}^{(\mathrm{b})} (\mathcal{W}_R^{\mathrm{T}})^{n-1} |{ A}\rrangle \nn \\
    &= \braket{A_{t}}_{\ups} \! \braket{A_{t_0}}_{\ups}. \nn
\end{align}
Here, in the third line we have used that $\llangle \cup |  \mc{W}^{(\mathrm{b})} | \cup \rrangle = 1$, which can be proven from the scale consistency condition \eqref{eq:isometry} and from the normalization of the initial environment state,
\begin{equation}
    \llangle \cup |  \mc{W}^{(\mathrm{b})} | \cup \rrangle =  \includegraphics[scale=1.2, valign=c]{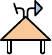} = \includegraphics[scale=1.2, valign=c]{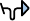} = \tr[\rho_0] = 1.
\end{equation}
Therefore, the connected correlation function will decay according to the subleading eigenvalues $\lambda_{2} = \lambda_{2}^L \lambda_{2}^R$,
\begin{equation}
    |\! \braket{A_{t_0} A_t}_{\ups}\! - \braket{A_{t}}_{\ups} \! \braket{A_{t_0}}_{\ups}\!| \underset{\ell \gg 0}{\sim} |\lambda_{2}|^{2n}  =   e^{- n/\xi }, 
\end{equation}
where the final equality is due to $|\lambda_{\mathrm{sub}}|<1$ and where $\xi > 0 $ can be interpreted as the (temporal) correlation length. Using $\Delta t = 2^{n}-1$, we arrive at the result.

Randomly sampled matrices generically have a non-degenerate spectrum. The question remaining is whether the internal structure of the transfer matrix $\mathrm{W}_{R/L}$ means that it will not satisfy this same generic property. For choices of randomly chosen $U_1$ and $U_2$ within the definition of $\mathrm{W}$, we numerically compute the spectrum of $\mathrm{W}_{R/L}$ and find that it tends to always be non-degenerate. 
\end{proof}

\MPS*

\begin{proof}
    Assume that the CPTP dynamics $\Lambda$ is generic, in that it is irreducible and so admits a unique (left) eigenvector with eigenvalue $|\lambda_1 | = 1$, from its TP property~\cite{Wolf_undated-dn}. Then, for large times $ t_b - t_a = n \delta t$,  
    \begin{equation}
        \Lambda^n \approx |{\lambda}\rrangle \llangle {\id}| + \lambda_{2}^n P_2 +\mc{O}(\lambda_3),
    \end{equation}
    where $\lambda_2$ is the sub-leading eigenvalue, and $P_2$ the corresponding projector onto this Jordan block. From this, one can show using the Frobenius Perron Theorem that two-point correlators will have exponentially decay with (a large) number of `time steps' $n$~\cite{Wolf_undated-dn},
    \begin{equation}
        |\braket{A(t_a) B(t_b)}- \braket{A(t_a)}  \braket{B(t_b)} | \sim \ex^{-n/\xi}. \label{eq:2pt-corr-MPS}
    \end{equation}
    Now consider any POVM $\mc{M}$, which is defined by the set of positive Hermitian operators $\{ M^{(r)}\}$, called POVM elements, such that $\sum_r M_r =\id$. Any arbitrary POVM element on a bipartite space can be written in a basis of separable (local) operators,
    \begin{equation}
        M^{(r)} = \sum_i {A}^{(r)}_i \otimes {B}^{(r)}_i,
    \end{equation}
    where ${A}^{(r)}_i$ and ${B}^{(r)}_i$ are operators solely on the space $\mc{H}_{t_a}$ and $\mc{H}_{t_b}$ respectively. This comes from the fact that any operator can be expanded in a local operator basis (such as the Pauli basis). Such decomposition always exists, which can be proven from operator Schmidt decomposition. Recall the operational definition of the $1-$norm distance,
    \begin{equation}
        \| \rho - \sigma\|_1 := \underset{\mc{M}}{\max } \sum_r | \tr[M^{(r)} (\rho - \sigma)]|,
    \end{equation}
    where the maximum is over all POVMs. Writing this out for $\ups_{b,a}$ as in Eq.~\eqref{eq:NM2}, and expanding the POVMs in a local basis on $\mc{H}_{t_a} \otimes \mc{H}_{t_b}$
    \begin{align}
        \| \ups_{b,a} - &\ups_a \! \otimes \! \ups_b \|_1 = \underset{\mc{M}}{\max } \sum_r | \tr[M^{(r)} (\ups_{b,a} - \ups_a\! \otimes \!\ups_b)]| \nn\\
        &= \underset{\mc{M}}{\max } \sum_r | \sum_i \tr[{A}^{(r)}_i \otimes {B}^{(r)}_i (\ups_{b,a} - \ups_a \otimes \ups_b)]|\nn \\
        &\leq \underset{\mc{M}}{\max } \sum_r \sum_i  | \tr[{A}^{(r)}_i \otimes {B}^{(r)}_i (\ups_{b,a} - \ups_a \otimes \ups_b)]| \nn \\
        &=\underset{\mc{M}}{\max } \sum_r \sum_i | \braket{{A}^{(r)}_i {B}^{(r)}_i} - \braket{{A}^{(r)}_i} \braket{{B}^{(r)}_i} |, \nn 
    \end{align}
    where in the penultimate line we have used the triangle inequality $| \sum_i x_i| \leq \sum_{i} |x_i | $. Now, the inside of the absolute value is exactly a connected two-point function Eq.~\eqref{eq:2pt-corr-MPS}, which decays exponentially with $n$ for any operators and so any $r$, $i$. The summation gives a multiplicative factor that depends only on the dimension of $\mc{H}_{t_a} \otimes \mc{H}_{t_b}$, and so 
    \begin{equation}
         \| \ups_{b,a} - \ups_a \! \otimes \! \ups_b \|_1 \sim \ex^{-n/\xi}.
    \end{equation}
    Finally, we can apply the Fannes-Aubenaert inequality to get~\cite{Wilde_2011}
    \begin{align}
        \eta(\ups; t_a, t_b)&=S(\ups_{a})+S(\ups_{b})-S(\ups_{b,a})\\
        &=S(\ups_{a} \otimes \ups_{b})-S(\ups_{b,a}) \\
        &\leq \ex^{-n/\xi} (n/\xi + \kappa \log d_{ab} ),
    \end{align}
    where in the second line we have also used the additivity of von Neumann entropy, $d_{ab}$ is the dimension of the density matrix $\ups_{b,a}$, and where $\kappa$ is a positive constant. With long times $n \to \infty$ the exponential factor here dominates, and so non-Markovianity as measured by the quantum mutual information $\eta(\ups; t_a, t_b)$ must decay at least exponentially with large $n$. 
\end{proof}

\section{Details of Process Tree Numerics} \label{ap:numerics}
This section serves as extra details on the numerical simulations computed for the results in Fig.~\ref{fig:numerics}, as well as an explanation for how one can estimate the computing cost for a correlation function computed on a process tree.

Arbitrary correlations in a $\mc{W}-$class process tree can be computed numerically through three elementary operations: left and right descending moves as in Eq.~\eqref{eq:Tdefa}, and a fusion move which we will introduce below. 

We first return to the computation of a one-time expectation value of an intervention $A$ acting on the $t^{\mbox{\tiny th}}$ slot of a process tree with $2^N$ intervention slots, as in Sec.~\ref{sec:one-time}. For simplicity, we assume that each wire of the tensor network has the same dimension $d$. We can break down the total contraction into a sequence of two basic contractions corresponding to coarse-graining an instrument acting either on the left or right slot of a tensor:
\begin{align}
    &\mathcal{W}_R |A\rrangle=\includegraphics[scale=3, valign=c]{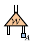}=\includegraphics[scale=3, valign=c]{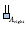}, \nn\\
    &\mathcal{W}_L |A\rrangle=\includegraphics[scale=3, valign=c]{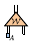}=\includegraphics[scale=3, valign=c]{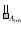}. \label{eq:Aright}
\end{align}
We refer to these basic contractions as the \textit{left} and \textit{right} moves, respectively. The contraction to evaluate the one-time expectation value can now be viewed as a sequence of left or right moves that coarse-grains the intervention $A$ all the way to the longest scale, where it is contracted with the single-time measurement process $\tree_0$ (Eq.~\eqref{eq:prepare}). 

According to the general estimate of the cost of contracting tensor networks reviewed in App.~\ref{ap:tn}, the computational cost of a left or right move is proportional to $d^6$. Since the scale causal cone consists of precisely one tensor at every time scale (Fig.\ref{fig:causalcone1}), the total number of left or right moves required to coarse-grain intervention $A$ to the longest time scale is $N$ (equal to the height of the process tree). Therefore, the total computational cost of computing the expectation value of an instrument that acts on a single intervention slot is proportional to $N d^6$, as reported in Sec.~\ref{sec:one-time}. 

\begin{figure}[t]
    \centering
    \includegraphics[width=8cm]{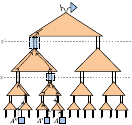} \captionsetup{justification=justified,singlelinecheck=false}
    \caption{A diagrammatic representation of the contractions involved in computing three-point correlations in a process tree. Analogous to Fig.~\ref{fig:causalcone2}, by coarse-graining a three-time correlator of interventions $A, A'$ and $A''$ simplifies to a two-time correlator at some scale $s$ and then a one-time expectation value at a longer scale $s' > s$. }
    \label{fig:causalcone3}
\end{figure}

The decomposition of the total contraction into a sequence of left and right moves, as described above, also eases the software implementation of the computation of one-time expectation values, particularly if the process tree is homogeneous (i.e. $\mc{W}^{(j)}_s\equiv \mc{W}$). In practice, one only has to implement the basic contractions, Eq.~\eqref{eq:Aright}, and then call them iteratively to coarse-grain the intervention to the longest time scale, where it is easily contracted with single-time measurement process (Eq.~\eqref{eq:prepare}). This can be done through a binary expansion of the site number $t$. This is easiest to explain via an example. Recall the single time expectation value Eq.~\eqref{eq:1-time} from the main text,
\begin{align}
    \braket{A}_{\tree}&=\includegraphics[scale=1.5, valign=c]{graphics/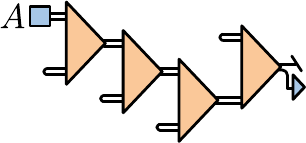} \label{eq:1-time-again}\\
    &= \llangle A |    \mc{W}^{(4)}_{4,L} \mc{W}^{(2)}_{3,L} \mc{W}^{(1)}_{2,L} \mc{W}^{(1)}_{1,R} |{\rho}\rrangle |{\id}\rrangle.
\end{align}
This comes from computing the expectation value of the instrument $A$ on the site $t=8$, for the height $N=4$ tree as in Fig.~\ref{fig:causalcone1} (i). Then, in binary $t=\text{`}1000\text{'}$ and from this we can immediately determine the required tensors in the contraction in Eq.~\eqref{eq:1-time-again}: reading from left to right, the first digit $\text{`}1\text{'}$ corresponds to contracting with $\mc{W}_R$ first, then the next three $\text{`}0\text{'}$ digits correspond to three left moves $\mc{W}_L$.

The computational cost for two-time and higher correlation functions can be estimated analogously by breaking down the computation into a sequence of elementary contractions. Each instrument independently ascends through its causal cone, via the left or right move, Eq.~\eqref{eq:Aright}, and the two instruments fuse together into an effective one-time intervention at some scale $s$, which then ascends the causal cone to the top of the tree. The \textit{fusion move} corresponds to the contraction of a  tensor with both slots occupied by instruments:
\begin{equation}
\mc{W} |{A^\prime}\rrangle |A\rrangle = \includegraphics[scale=3, valign=c]{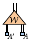} = \includegraphics[scale=3, valign=c]{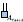} = |{A_{\mathrm{fused}}} \rrangle.\label{eq:Afused}
\end{equation}
One can implement this numerically through the binary expansion of the time $t$ and $t^\prime$. Each index represents a left or right move, and where the indices first differ (reading from left to right) is where a fusion move is applied. We implement these operations above to arrive at the efficient numerical results of Fig.~\ref{fig:numerics}, using the Python package \texttt{ncon} for elementary tensor network semantics~\cite{pfeifer2015ncon}.

Finally, higher-time correlators then generalize the above in a straightforward manner, and can also be computed as a sequence of the left, right, and fusion moves. Fig.~\ref{fig:causalcone3} shows the tensor network contraction for a three-time correlator. From this figure, it is intuitive to see that the polynomial decay of correlations results of Fig.~\ref{fig:numerics} and Thm.\ref{thm:polyDecay}, in terms of $\Delta t$, will generalize to $k-$time correlations $\braket{A_{t_k }  \dots A_{t_2 } A_{t_1 }  A_{t_0}}_{\tree}$ for $k \geq 3$ to produce polynomial decay in each of the variables $\Delta t_i = t_{i+1} - t_i $.

\end{document}